\titleformat{\section}{\normalfont\large\bfseries}{\thesection.}{0.5em}{}
\titleformat{\subsection}{\normalfont\normalsize\bfseries}{\thesubsection.}{0.5em}{}
\titleformat*{\section}{\large\bfseries}
\titleformat*{\subsection}{\normalsize\bfseries}
\tikzset{>=stealth}
\def\pgfplots@drawtickgridlines@INSTALLCLIP@onorientedsurf#1{}
\renewcommand{\bibsection}{
\begin{center}
\section*{\refname\@mkboth{\MakeUppercase{\refname}}
{\MakeUppercase{\refname}}}
\end{center}
}
\newtheorem{theorem}{Theorem}
\newtheorem{corollary}[theorem]{Corollary}
\newtheorem{definition}{Definition}
\newtheorem{lemma}{Lemma}
\newtheorem{theorem-app}{Theorem}[section]
\newtheorem{lemma-app}[theorem-app]{Lemma}
\newtheorem{proposition-app}[theorem-app]{Proposition}
\newenvironment{proof}[1][\proofname]{
\par\normalfont\trivlist\item[\hskip\labelsep\textbf{#1}.]\ignorespaces}
{\hfill $\square$ 
\endtrivlist}
\newcommand{\proofname}{Proof}
\newcommand{\gs}{\sigma}
\newcommand{\ga}{\alpha}
\newcommand{\gl}{\lambda}
\newcommand{\eps}{\varepsilon}
\newcommand{\gd}{\delta}
\newcommand{\R}{{\mathbb R}}
\newcommand{\go}{\omega}
\newcommand{\diag}{\operatorname{diag}}
\newcommand{\gS}{{\Sigma}}
\newcommand{\tr}{\operatorname{tr}}
\newcommand{\vertiii}[1]{{\left\vert\kern-0.25ex\left\vert\kern-0.25ex\left\vert #1 
\right\vert\kern-0.25ex\right\vert\kern-0.25ex\right\vert}}
\newcommand{\bK}{{\mathbb K}}
\newcommand{\be}{\begin{equation}}
\newcommand{\ee}{\end{equation}}
\newcommand{\bea}{\begin{eqnarray}}
\newcommand{\eea}{\end{eqnarray}}
\newcommand{\bee}{\begin{equation*}}
\newcommand{\eee}{\end{equation*}}
\begin{document}

\title{Large and Deep Factor Models
}

\author{}

\author{
Bryan Kelly, Boris Kuznetsov, Semyon Malamud, and Yuan Zhang\thanks{Bryan Kelly is at Yale School of Management, AQR Capital Management, and NBER; \url{www.bryankellyacademic.org}. Boris Kuznetsov is at AQR Capital Management. Semyon Malamud is at the Swiss Finance Institute, EPFL, and CEPR, and is a consultant to AQR. Yuan Zhang is at the Shanghai University of Finance and Economics. Semyon Malamud gratefully acknowledges the financial support of the Swiss Finance Institute and the Swiss National Science Foundation, Grant 100018-228042.
AQR Capital Management is a global investment management firm that may or may not apply similar investment techniques or methods of analysis as described herein. The views expressed here are those of the authors and not necessarily those of AQR. This work was supported by a grant from the Swiss National Supercomputing Center (CSCS) under project ID lp46. We thank Mohammad Pourmohammadi for his excellent comments and suggestions.}  
}
\maketitle
\begin{abstract}\setlength{\parskip}{0pt}
We show that a deep neural network (DNN) trained to construct a stochastic discount factor (SDF) admits an additive decomposition separating nonlinear characteristic discovery from the pricing rule that aggregates them. This decomposition yields a linear factor representation governed by the Portfolio Tangent Kernel (PTK), which summarizes the network’s learned features. In population, the implied SDF converges to a ridge-regularized version of the true SDF, with the degree of regularization determined by spectral complexity. Empirically, using U.S. equity data, the PTK representation delivers economically and statistically significant performance gains, while rising spectral complexity imposes tighter limits on finite-sample pricing.

\noindent\textbf{Keywords:} Asset pricing; stochastic discount factor; neural networks; factor models  
\\
\textbf{JEL classification:} G12; C45; C52
\end{abstract}

\section{Introduction}

Modern asset pricing increasingly relies on high-dimensional data and flexible statistical methods. A growing literature shows that deep neural networks (DNNs) can construct stochastic discount factors (SDFs), return forecasts, and portfolio strategies that outperform traditional linear factor models. These empirical successes raise a fundamental but unresolved question: what, exactly, do neural networks learn when trained on financial data?

From an economic perspective, training a neural network implicitly performs two distinct tasks. First, the network learns {\it what to price}: nonlinear features constructed from firm characteristics that summarize economically relevant information. Second, it learns {\it how to price}: a rule that aggregates those features into an SDF. In standard implementations, these two tasks are learned jointly through gradient descent and are therefore tightly entangled. As a result, the network must simultaneously discover informative features and estimate their optimal combination from finite samples. This joint learning problem obscures economic interpretation and can lead to statistical inefficiencies, reinforcing the perception of neural networks as black boxes that are difficult to discipline using asset pricing theory.

This paper shows that DNN-based SDFs admit a sharp additive decomposition that separates feature learning from pricing. For networks trained by gradient descent, the estimated SDF can be written as the sum of two components: a feature-based term that depends only on the nonlinear characteristics learned by the network, and a residual model-dependent term. We show that the feature-based component admits a closed-form representation governed by a new object, the {\it Portfolio Tangent Kernel} (PTK).

Conditional on the learned characteristics, the PTK induces a unique linear pricing rule. The resulting SDF—what we call the {\it PTK-SDF}—is the Markowitz portfolio of a very large collection of characteristic-managed factors. This representation takes the form of a Large Factor Model (LFM). Because this pricing rule is optimal conditional on the learned features, the residual model-dependent component of the original DNN does not improve performance and can be discarded. In this sense, neural networks are best understood as powerful devices for feature discovery, while the PTK provides a cleaner, transparent, and statistically efficient way to price those features.

Empirically, this distinction is decisive. Using U.S.\ equity data, we show that the PTK-SDF systematically outperforms the original DNN-SDF, delivering higher Sharpe ratios and economically significant alphas across asset size groups. Conditioning on learned features while re-optimizing the pricing rule mitigates learning-induced noise that arises when feature discovery and pricing are conducted jointly. This leads to a modular estimation pipeline: a neural network is used solely to learn nonlinear characteristics; these characteristics are mapped into a large set of characteristic-based factors; and the factors are combined using an explicit, regularized portfolio rule. The PTK-SDF therefore isolates—and improves upon—the economically relevant content of the original network.

The closed-form PTK representation also allows us to characterize the statistical limits of DNN-based asset pricing models. Because the number of learned factors is extremely large relative to available time-series observations, classical inference breaks down. Building on recent results in random matrix theory, we show that performance is governed by two forces. The first is {\it alignment}: the extent to which risk premia load on statistically strong directions of factor risk. The second is {\it spectral complexity}: the dispersion of factor risk across principal components, which determines the effective degree of regularization faced by the investor. Higher spectral complexity corresponds to a flatter risk spectrum, in which economically relevant variation is spread across many weak directions. In finite samples, this dispersion amplifies estimation error and limits achievable Sharpe ratios.

We document substantial time variation in the spectral complexity of the PTK. Over the past two decades, it has increased by roughly a factor of six, indicating a pronounced rise in the statistical complexity of the SDF. This increase coincides with the well-documented decline in factor model performance since the early 2000s. At the same time, alignment between risk premia and factor risk has improved, partially offsetting the adverse effects of rising complexity. These findings highlight a fundamental trade-off: feature learning improves economic alignment, but rising spectral complexity imposes limits on what can be learned from finite samples.

An important implication of this framework is that feature learning fundamentally reshapes regularization. We find that factors constructed from learned features exhibit substantially higher spectral complexity than factors constructed from random features, such as those studied by \cite{didisheim2024apt}. Quantitatively, the spectral complexity of learned-feature-based factors is roughly an order of magnitude larger. As a result, these factors are subject to strong {\it implicit} regularization even in the absence of explicit ridge penalties. In particular, the PTK-SDF achieves its best performance with little or no additional shrinkage: despite involving hundreds of thousands of factors, the optimal pricing rule is effectively ridgeless.\footnote{Here, “ridgeless” does not mean unregularized. The PTK-SDF remains strongly disciplined through the spectral complexity of the learned factors, which endogenously shrinks weak risk directions even in the absence of an explicit penalty.}

Beyond statistical performance, the PTK representation restores economic discipline. We show that PTK-based SDFs exhibit substantially stronger alignment with long-horizon and future consumption risk—both in the sense of \cite{ParkerJulliard2005} and its micro-founded extension in \cite{MalamudWangZhang2025}—than either fully trained DNN-SDFs or random-feature benchmarks.

Taken together, our findings provide a unified framework for understanding deep learning in asset pricing. Neural networks learn economically informative features, but pricing those features efficiently requires separating feature discovery from model estimation. The Portfolio Tangent Kernel performs this separation by replacing the original DNN with an explicit large-factor pricing representation. This representation allows both the economic content of learned features and the statistical limits imposed by finite samples to be analyzed using modern asset pricing tools rooted in high-dimensional statistics and random matrix theory.

\section{Related Literature}

A central theme in asset pricing is the tension between sparsity and complexity in the construction of stochastic discount factors (SDFs). While classical factor models emphasize parsimony, a large literature documents that richly parameterized models substantially outperform sparse specifications out of sample \citep[e.g.,][]{cochrane2011presidential,harvey2016and,mclean2016does,Kellyetal2018,guetal2020,hou2020replicating,didisheim2024apt}. See \citep{kelly2023survey} for an overview.

These findings motivate Large Factor Models (LFMs), which construct SDFs from extensive collections of characteristic-based factors. With appropriate shrinkage, LFMs perform well even when based on simple linear characteristics \citep{kozak2020shrinking,kelly2023upsa}. At the same time, approximating the true conditional SDF may require an extremely large number of factors. \citep{didisheim2024apt} show that, in the limit of infinitely many factors, the unconditionally optimal factor portfolio converges to the true conditional SDF, despite the latter being infeasible to estimate directly.

The use of highly over-parameterized factor models raises natural statistical concerns. Estimation error becomes severe when the number of factors exceeds the sample size, and classical inference breaks down. Nevertheless, \citep{didisheim2024apt}, building on the virtue-of-complexity framework of \citep{kelly2022virtue}, show that extreme dimensionality can improve out-of-sample performance. A key insight is that regularization (in the form of effective shrinkage) arises endogenously. \citep{didisheim2024apt}, \citep{kelly2025understanding}, and \citep{chernov2025test} show that effective shrinkage is determined by the spectral properties of covariance matrices, while \citep{chernov2025test} emphasize the role of alignment between factor risk premia and factor risk in determining achievable Sharpe ratios.

These insights motivate new tools for inference. When the number of factors exceeds the sample size, the classical \citep{gibbons1989test} statistic diverges. \citep{chernov2025test} propose a debiased GRS statistic that converges to a ridge-penalized population Sharpe ratio and provides a direct measure of alignment. We adopt this statistic in our empirical analysis.

Most existing Large Factor Models rely on linear factors or random nonlinear transformations. In particular, the model studied by \citep{didisheim2024apt} is equivalent to a shallow but wide neural network with untrained hidden-layer weights. In contrast, much of the machine-learning-based asset pricing literature employs fully trained, multi-layer networks whose economic content and statistical properties are difficult to analyze theoretically \citep[e.g.,][]{chen2019deep,gu2020empirical,fan2022structural}.

This paper bridges these strands by distinguishing between feature learning and pricing in deep learning–based asset pricing models. We study fully trained neural networks of arbitrary depth and width that remain analytically tractable using Neural Tangent Kernel methods \citep{jacot2018neural,yang2020tensor}. We show that, while the DNN learns economically informative nonlinear features, the associated pricing rule need not be optimal. Instead, the learned features admit a closed-form Large Factor Model representation governed by a novel kernel, the Portfolio Tangent Kernel (PTK), which delivers an optimal pricing rule conditional on those features. This perspective clarifies how feature learning reshapes spectral complexity, implicit regularization, and alignment, and connects deep learning models to the emerging theory of high-dimensional asset pricing. 

\section{Parametric SDFs, Kernels, and Neural Networks}

We consider a panel of stocks indexed by $i = 1, \ldots, N_t$, with excess returns
\[
R_{t+1} = \left( R_{i,t+1} \right)_{i=1}^{N_t}
\]
observed at time $t+1$. Each stock $i$ is associated with a $d$-dimensional vector
of characteristics
\[
X_{i,t} = \left( X_{i,t}(k) \right)_{k=1}^d,
\]
and we collect all characteristics in the matrix
\[
X_t = \bigl[ X_t(1), \ldots, X_t(d) \bigr] \in \mathbb{R}^{N_t \times d}.
\]
The true tradable stochastic discount factor (SDF) is given by
\begin{equation}
M_{t+1}\ =\ 1\ -\ \pi_t'R_{t+1}\,,
\end{equation}
where $\pi_t$ denotes the conditionally efficient portfolio
\begin{equation}\label{mark1}
\pi_t\ =\ E_t[R_{t+1}R_{t+1}']^{-1}\,E_t[R_{t+1}]\,,
\end{equation}
which solves the utility maximization problem  
\begin{equation}\label{q-obj1}
\max_{\pi_t}E_t[\pi_t'R_{t+1}\ -\ 0.5 (\pi_t'R_{t+1})^2]\,.
\end{equation}
If $X_t$ contains all relevant conditioning information, then $\pi_t$ in \eqref{mark1} can be written as
\begin{equation}\label{mark2}
\pi_t\ =\ \pi(X_t)\,,
\end{equation}
for some unknown and potentially highly nonlinear function $\pi(X)\,.$ Using the law of iterated expectations, the problem of finding the conditionally optimal portfolio policy can then be rewritten as an {\it unconditional}, non-parametric utility maximization problem:
\begin{equation}\label{non-param}
\min_{\pi(\cdot)}E[(1\,-\,\pi(X_t)'R_{t+1})^2]\ =\ 1\ -\ 2\,\max_{\pi(\cdot)}E\big[U\big(\pi(X_t)'R_{t+1}\big)\big], 
\end{equation}
where $U(x)\ =\ x-0.5 x^2.$ A standard approach to solving \eqref{non-param} for $\pi(X)$ is to posit a sufficiently rich parametric family of functions $f(x;\theta)$ and to estimate the parameter vector $\theta$ using sample analogs of the objective in \eqref{non-param}. Following \citep{didisheim2024apt}, we consider a ridge-penalized version of this objective,
\begin{equation}\label{main-1}
\min_{\theta}L(\theta),\ where\ L(\theta)\ =\ \Bigg(\frac1T\sum_{t=1}^T(1\ -\ f(X_t;\theta)'R_{t+1})^2\ +\ z\,\|\theta\|^2\Bigg)\,,
\end{equation}
where $z$ denotes the ridge penalty and serves as a regularization parameter.

Equation \eqref{main-1} highlights the central tension underlying high-dimensional SDF estimation. Richer function families reduce approximation bias by allowing rich functional forms for $\pi(X)$, but they simultaneously exacerbate estimation error when the dimensionality of $\theta$ is large relative to the sample size. As a result, improvements in in-sample fit need not translate into better out-of-sample SDF performance. Our contribution is to characterize how this bias--variance tradeoff manifests itself in MSRR-based SDF estimation and to show that, in finite samples, limits to learning impose sharp constraints on the achievable Sharpe ratio, even when the true SDF lies within the span of the chosen function family.

A further complication arises from computation. When $\theta$ is high-dimensional, solving \eqref{main-1} exactly is generally infeasible. In practice, estimation relies on iterative optimization algorithms that converge to particular local minima. Recent work in machine learning emphasizes that the choice of optimization algorithm plays a central role in determining out-of-sample performance; see, for example, \citep{zhou2023temperature}. Thus, not only the function family $f(x;\theta)$, but also the manner in which \eqref{main-1} is solved, matters for the properties of the resulting SDF.

The empirical success of deep learning reflects the fact that certain function families—deep neural networks combined with gradient-based optimization—can achieve good generalization even in regimes where the number of parameters exceeds the available sample size; see, for example, \citep{nakkiran2021deep} and \citep{belkin2021fit}. For asset pricing, this raises two distinct questions: what features of the data are learned by these models, and how those features are combined into an SDF. In the sections that follow, we show that addressing these questions separately is key to understanding both the empirical performance and the statistical limits of neural-network-based SDFs.

\subsection{Random Features and Large Factor Models}
\begin{figure}
\centering
\includegraphics[width=0.9\textwidth]{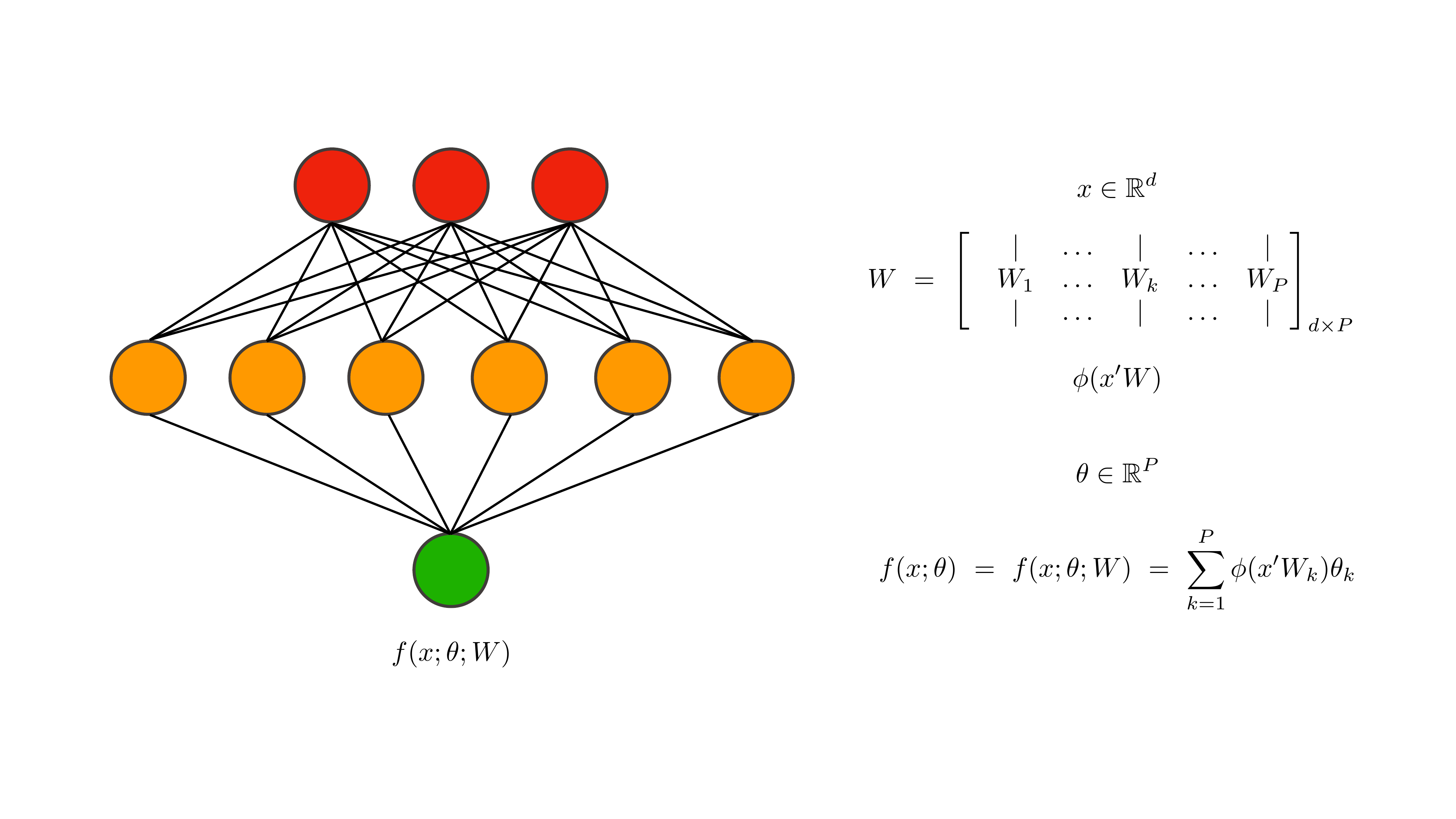}
\caption{The figure above shows $f(x; \theta; W)$, a mathematical representation of a neural network with a single hidden layer, also known as a shallow network. The weights $W \in \R^{d\times P}$ and $\theta \in \R^{P}$ are randomly initialized. $\phi: \R \rightarrow \R$ is an elementwise non-linear activation function and the vector $\phi(x'W)$ is referred to as {\it random features}.}
\label{fig:shallow_layer}
\end{figure}

Before turning to fully trained deep neural networks, it is useful to consider simpler function families that depend linearly on the parameter vector $\theta\in \R^P.$ These models provide a transparent benchmark in which features are fixed rather than learned, allowing for an explicit characterization of pricing, regularization, and asymptotic behavior. Following \citep{didisheim2024apt}, we consider the family
\begin{equation}\label{dd1}
f(x;\theta)\ =\ f(x;\theta;W)\ =\ \sum_{k=1}^P \phi(x'W_k)\theta_k \,,
\end{equation}
where $x \in \R^d$ denotes the vector of raw characteristics, $W_k\in \R^d$ are randomly drawn weights, and $\phi(\cdot)$ is a nonlinear activation function. The nonlinear functions $\phi(x'W_k)$ are referred to as random features. As explained in \citep{didisheim2024apt}, the specification in \eqref{dd1} is equivalent to a shallow, single-hidden-layer neural network in which the output-layer weights $\theta_k$ are estimated, while the hidden-layer weights $W_k$ are held fixed and not optimized. Figure~\ref{fig:shallow_layer} provides a schematic illustration.

The linearity of the function family in \eqref{dd1} implies that the optimization problem in \eqref{main-1} admits a unique closed-form solution. Moreover, this solution can be expressed in terms of characteristics-managed portfolios, or factors. Specifically, defining the vector of factor returns $F_{t+1}\ =\ (F_{k,t+1})_{k=1}^P$ with
\begin{equation}
F_{k,t+1}\ =\ \sum_{i=1}^{N_t}\underbrace{\phi(X_{i,t}',W_k)}_{k\text{-}th\ random\ feature}\,R_{i,t+1}\,,
\end{equation}
we can rewrite the SDF as a factor portfolio,
\begin{equation}\label{sdf=fac}
\sum_{i=1}^{N_t} f(X_{i,t};\theta)R_{i,t+1}\ =\ \theta'F_{t+1}\,.
\end{equation}
Substituting \eqref{sdf=fac} into \eqref{main-1} yields
\begin{equation}\label{msrr-th0}
\theta(z)\ =\ \arg\min_{\theta}L(\theta)\ =\ \left(zI+T^{-1}\sum_{t=1}^T F_tF_t'\right)^{-1}T^{-1}\sum_{t=1}^T F_t\,.
\end{equation}
Thus, within the linear family \eqref{dd1}, the estimated SDF admits a clear and familiar interpretation as the mean--variance efficient portfolio of a (potentially very large) collection of factors.

Standard intuition based on the Arbitrage Pricing Theory of \citep{Ross1976} suggests that the number of factors required to span the SDF should be small. However, \citep{didisheim2024apt} show both theoretically and empirically that this intuition fails in high-dimensional settings. They establish the {\it virtue of complexity}, whereby out-of-sample performance is monotonically increasing in the number of factors $P$, even in the limit as $P\to\infty$. This result naturally raises the question of the limiting behavior of the estimated SDF as the number of random features grows. Addressing this question requires introducing tools from kernel theory.

A positive definite kernel on $\R^D$ is a function $K(x,\tilde x):\R^D\times \R^D\to \R$ such that, for any collection of vectors $x_i\in \R^D,i=1,\cdots,n,$ and any coefficient vector $\ga\in \R^n,$
\begin{equation}
\sum_{i_1=1}^n\sum_{i_2=1}^n\ga_{i_1}\ga_{i_2}K(x_{i_1},x_{i_2})\ \ge\ 0\,.
\end{equation}
Equivalently, the kernel matrix $\bar K\ =\ (K(x_{i_1},x_{i_2}))_{i_1,i_2=1}^n$ is positive semidefinite. Given a collection of nonlinear functions $\phi_k(x),\ k=1,\cdots,P,$ commonly referred to as features, one can define the associated kernel
\begin{equation}\label{ker=fea}
K(x,\tilde x)\ =\ \frac1P\sum_{k=1}^P \phi_k(x)\phi_k(\tilde x)\,.
\end{equation}
By direct calculation, $K$ is positive semidefinite.\footnote{Indeed, $\sum_{i_1=1}^n\sum_{i_2=1}^n\ga_{i_1}\ga_{i_2}K(x_{i_1},x_{i_2})=\frac1P\sum_k (\sum_i \phi_k(x_i)\ga_i)^2 \ge\ 0.$} Conversely, any positive definite kernel can be approximated by a sufficiently rich collection of nonlinear features: By Mercer’s theorem (see, e.g., \citealp{bartlett2003reproducing,rahimi2007random}), any positive definite kernel $K(x,\tilde x)$ admits the representation
\begin{equation}\label{bochner}
K(x,\tilde x)\ =\ \int \phi(x;\go)\,\phi(\tilde x;\go)\,p(d\go)\,,
\end{equation}
for some probability distribution $p(d\go)$. Sampling $W_k$ from $p(d\go)$ yields the Monte Carlo approximation
\begin{equation}\label{mercer}
K(x,\tilde x)\ =\ \lim_{P\to\infty}\frac1P \sum_{k=1}^P \phi(x;W_k)\,\phi(\tilde x;W_k)\,.
\end{equation}
This representation implies that, as $P\to\infty$, the SDF based on the optimal $\theta$ in \eqref{msrr-th0} converges to a well-defined limit determined solely by the kernel $K$. To characterize this limit in an asset-pricing setting, we introduce the following definition.

\begin{definition}[The Portfolio Kernel]\label{def:pk}
Let
\begin{equation}\label{market-state}
Y_t\ =\ (R_{t+1},X_t)\ \in\ \R^{N_t(d+1)}
\end{equation}
denote the market state, encompassing all stock characteristics and all stock returns. 
Given any two market states $Y=(R,X),\ \tilde Y=(\tilde R,\tilde X),$ with $R\in \R^N,\ \tilde R\in \R^{\tilde N},$ we let
\begin{equation}
K(X,\tilde X)\ =\ (K(X_i,\tilde X_j))\ \in\ \R^{N\times \tilde N}\,,
\end{equation}
and define the {\it Portfolio Kernel} as
\begin{equation}
\bK(Y;\tilde Y)\ =\ \underbrace{R'}_{1\times N}K(X,\tilde X)\underbrace{\tilde R}_{\tilde N\times 1}\,. 
\end{equation}
\end{definition}

The Portfolio Kernel operates on market states and produces a kernel-based similarity measure $\bK(Y;\tilde Y)$. Importantly, this similarity measure is itself a portfolio return, as it aggregates asset returns using weights determined by the similarity of characteristics across market states.

We denote the in-sample data by
\begin{equation}
Y_{IS}\ =\ (R_{t+1}, X_t)_{t=1}^T \in \R^{T\times N_t (d+1)},
\end{equation}
and define the corresponding in-sample kernel matrix
\begin{equation}\label{k-is}
\bK_{IS}\ =\ (\bK(Y_{t_1};Y_{t_2}))_{t_1,t_2=1}^T\ \in\ \R^{T\times T}\,,
\end{equation}
which measures similarity across in-sample market states.

Let $K$ be a positive definite kernel admitting the representation \eqref{bochner}. Let $\{W_k\}_{k=1}^P$ be independently sampled from $p(d\go)$ in \eqref{bochner}, and define the factor vector $F_{t+1}=(F_{k,t+1})_{k=1}^P$ by
\begin{equation}
F_{k,t+1}\ =\ \frac1{P^{1/2}}\sum_{i=1}^{N_t}\phi(X_{i,t};W_k)\,R_{i,t+1}\,.
\end{equation}
Equivalently,
\begin{equation}
F_{t+1}\ =\ \frac1{P^{1/2}}\phi(X_t;W)R_{t+1}\ \in\ \R^P\,.
\end{equation}
Let
\begin{equation}\label{msrr-th}
\theta(z;T)\ =\ \left(zI+T^{-1}\sum_{t=1}^T F_tF_t'\right)^{-1}T^{-1}\sum_{t=1}^T F_t
\end{equation}
denote the ridge-penalized efficient factor portfolio, and let $\theta'F_{t+1}$ be the corresponding SDF. By direct calculation,
\begin{equation}\label{lim-1}
\begin{aligned}
F_{t+1}'F_{\tau+1}
&=\ \frac1P R_{t+1}'\phi(X_t;W)\phi(X_\tau;W)'R_{\tau+1}\\
&=\ R_{t+1}'\left(\frac1P\sum_{k=1}^P \phi(X_t;W_k)\phi(X_\tau;W_k)'\right)R_{\tau+1}\\
&\ \underbrace{\to}_{\eqref{mercer}}\ R_{t+1}'K(X_t,X_\tau)R_{\tau+1}\ =\ \bK(Y_t,Y_\tau)\,.
\end{aligned}
\end{equation}
Denoting by $F=(F_t)_{t=1}^T\in \R^{P\times T}$ the matrix of in-sample factor returns, it follows that
\begin{equation}\label{lim-2}
F'F\ =\ (F_{t_1}'F_{t_2})_{t_1,t_2=1}^T\ \approx\ (\bK(Y_{t_1},Y_{t_2}))_{t_1,t_2=1}^T\ =\ \bK_{IS}\,.
\end{equation}
We will require the following technical lemma.

\begin{lemma}\label{ch-o-lem}
Let $F\in \R^{P\times T}$ denote the matrix of in-sample factor returns, and let ${\bf 1}=(1,\cdots,1)\in \R^T$ be the vector of ones. Then,
\begin{equation}
\theta\ =\ T^{-1}(zI+T^{-1}FF')^{-1}F{\bf 1}\ =\ T^{-1}F(zI+T^{-1}F'F)^{-1}{\bf 1}\,.
\end{equation}
\end{lemma}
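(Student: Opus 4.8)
The identity to be proved is the classical "push-through" (or Searle / Woodbury-type) matrix identity $(zI + AA')^{-1}A = A(zI + A'A)^{-1}$, applied with $A = T^{-1/2}F$, together with the bookkeeping of the scalar factors of $T$. So the plan is to reduce the stated equality to this push-through identity and then verify it directly.

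Concretely, I would first observe the trivial algebraic fact
$$
A(zI + A'A) \;=\; A z + A A' A \;=\; (zI + AA')A,
$$
which holds for any $A \in \R^{P\times T}$ and any scalar $z$. Both $zI + AA' \in \R^{P\times P}$ and $zI + A'A \in \R^{T\times T}$ are invertible whenever $z > 0$ (they are symmetric positive definite), so I may left-multiply the displayed identity by $(zI + AA')^{-1}$ and right-multiply by $(zI + A'A)^{-1}$ to obtain
$$
(zI + AA')^{-1}A \;=\; A(zI + A'A)^{-1}.
$$
Substituting $A = T^{-1/2}F$ gives $AA' = T^{-1}FF'$ and $A'A = T^{-1}F'F$, so the identity reads $(zI + T^{-1}FF')^{-1}T^{-1/2}F = T^{-1/2}F(zI + T^{-1}F'F)^{-1}$. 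Multiplying both sides by $T^{-1/2}$ turns the left-hand side into $T^{-1}(zI + T^{-1}FF')^{-1}F$ and the right-hand side into $T^{-1}F(zI + T^{-1}F'F)^{-1}$. Finally, applying both sides to the vector ${\bf 1}$ yields exactly the two expressions for $\theta$ in the lemma, once I recall from \eqref{msrr-th} (or \eqref{msrr-th0}) that $\theta = (zI + T^{-1}FF')^{-1}T^{-1}\sum_{t=1}^T F_t = (zI + T^{-1}FF')^{-1}T^{-1}F{\bf 1}$, since $\sum_{t=1}^T F_t = F{\bf 1}$.

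There is essentially no obstacle here; the only points requiring a word of care are (i) noting that both resolvents are well-defined, which follows from $z>0$ and positive semidefiniteness of $FF'$ and $F'F$, and (ii) tracking the powers of $T$ so that the $T^{-1}$ in front matches on both sides. If one wanted to avoid even invoking $z>0$, one could instead prove the identity for all $z$ in a neighborhood of infinity by a Neumann series argument and then extend by analyticity in $z$, but that is overkill — the direct push-through argument above is cleanest and is the route I would take.
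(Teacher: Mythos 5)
Your proof is correct: the push-through identity $(zI+AA')^{-1}A = A(zI+A'A)^{-1}$ with $A=T^{-1/2}F$, plus the observation that $\sum_t F_t = F{\bf 1}$, is exactly what is needed, and the invertibility and $T$-scaling bookkeeping are handled properly. The paper gives no explicit proof of this lemma (it simply remarks that $\theta$ is the ridge coefficient from regressing ${\bf 1}$ on factor returns), and your argument is the standard primal--dual ridge calculation that this remark implicitly invokes, so you are taking essentially the same route while supplying the omitted details.
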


Lemma \ref{ch-o-lem} follows directly from the regression objective \eqref{main-1}: $\theta$ is the coefficient vector obtained by regressing ${\bf 1}$ on factor returns. Consequently,
\begin{equation}
\begin{aligned}
F_{T+1}'\theta
&=\ T^{-1}\underbrace{F_{T+1}'F}_{\to\ \bK(Y_{T+1},Y_{IS})}
\left(zI+T^{-1}\underbrace{F'F}_{\to\ \bK_{IS}}\right)^{-1}{\bf 1}\,,
\end{aligned}
\end{equation}
and we obtain the following result.

\begin{theorem}[LFM-SDF]\label{th:lfm}
The Large Factor Model (LFM) SDF is given by
\begin{equation}
R^{LFM}(z)_{T+1}\ =\ \lim_{P\to\infty}\theta'F_{T+1}\ =\ \bK(Y_{T+1},Y_{IS})\xi\ =\ \sum_{t=1}^T \underbrace{\bK(Y_{T+1},Y_t)}_{attention\ to\ t}\ \underbrace{\xi_t}_{optimal\ weight}\,,
\end{equation}
where
\begin{equation}
\xi\ =\ \frac1T (zI+T^{-1}\bK_{IS})^{-1}{\bf 1}\,.
\end{equation}
\end{theorem}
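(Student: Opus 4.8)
The plan is to combine the push-through identity of Lemma~\ref{ch-o-lem} with the Monte-Carlo convergence \eqref{mercer}, so as to collapse the $P$-dimensional factor-portfolio problem onto a fixed $T$-dimensional kernel computation and then pass to the limit by continuity. First I would apply Lemma~\ref{ch-o-lem} to write $\theta = T^{-1}F(zI+T^{-1}F'F)^{-1}{\bf 1}$, where $F\in\R^{P\times T}$ is the matrix of in-sample factor returns. Since the resolvent $(zI+T^{-1}F'F)^{-1}$ is symmetric, transposing gives $\theta' = T^{-1}{\bf 1}'(zI+T^{-1}F'F)^{-1}F'$, hence
\[
\theta'F_{T+1}\ =\ T^{-1}{\bf 1}'(zI+T^{-1}F'F)^{-1}F'F_{T+1}\,.
\]
The point of this rewriting is that every object on the right-hand side now lives in dimension $T$, independent of $P$: $F'F$ is the $T\times T$ Gram matrix and $F'F_{T+1}=(F_t'F_{T+1})_{t=1}^T$ is a $T$-vector.

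Next I would invoke the convergence already recorded in \eqref{lim-1}--\eqref{lim-2}. Conditionally on the data $(Y_{IS},Y_{T+1})$, each entry equals $F_{t_1}'F_{t_2}= R_{t_1+1}'\big(\tfrac1P\sum_{k=1}^P\phi(X_{t_1};W_k)\phi(X_{t_2};W_k)'\big)R_{t_2+1}$, an average over the i.i.d.\ draws $W_k\sim p(d\go)$; by the strong law of large numbers it converges, as $P\to\infty$, to $R_{t_1+1}'K(X_{t_1},X_{t_2})R_{t_2+1}=\bK(Y_{t_1},Y_{t_2})$, and the same argument applied with the out-of-sample index gives $F_t'F_{T+1}\to\bK(Y_t,Y_{T+1})$. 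Since $F'F$ and $F'F_{T+1}$ have fixed finite size, entrywise convergence is convergence in norm, so $F'F\to\bar\bK_{IS}$ and $F'F_{T+1}\to\bK(Y_{IS},Y_{T+1})$.

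Finally I would pass to the limit. Because $z>0$, the map $A\mapsto(zI+A)^{-1}$ is continuous on the cone of positive semidefinite $T\times T$ matrices (the inverse always exists and has operator norm at most $z^{-1}$), so $(zI+T^{-1}F'F)^{-1}\to(zI+T^{-1}\bar\bK_{IS})^{-1}$ by the continuous mapping theorem; multiplying by the convergent vector $T^{-1}{\bf 1}'$ on the left and $F'F_{T+1}$ on the right then yields
\[
\lim_{P\to\infty}\theta'F_{T+1}\ =\ T^{-1}{\bf 1}'(zI+T^{-1}\bar\bK_{IS})^{-1}\bK(Y_{IS},Y_{T+1})\ =\ \bK(Y_{T+1},Y_{IS})\,\xi\,,
\]
with $\xi=\tfrac1T(zI+T^{-1}\bar\bK_{IS})^{-1}{\bf 1}$, where the last equality regroups the scalar and uses the symmetry $\bK(Y_{IS},Y_{T+1})'=\bK(Y_{T+1},Y_{IS})$ to display the answer row-wise; this is the claimed identity. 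The main obstacle is one of rigor rather than ideas: one must secure the law of large numbers for the entries of $F'F$ and $F'F_{T+1}$, i.e.\ finiteness of the second moments $\int \phi(X_{i,t};\go)^2\,p(d\go)$ — precisely the integrability already needed for the Mercer representation \eqref{bochner} to make sense — and one should be explicit about the mode of convergence (almost sure in the $W_k$, conditional on the realized data), after which the passage to the limit is automatic since everything takes place in the fixed finite dimension $T$ with a resolvent bounded uniformly by $z^{-1}$.
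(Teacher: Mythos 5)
Your proposal is correct and follows essentially the same route as the paper: apply the push-through identity of Lemma~\ref{ch-o-lem} to write $\theta'F_{T+1}=T^{-1}{\bf 1}'(zI+T^{-1}F'F)^{-1}F'F_{T+1}$, invoke the entrywise Monte-Carlo convergence \eqref{lim-1}--\eqref{lim-2} of $F'F\to\bar\bK_{IS}$ and $F'F_{T+1}\to\bK(Y_{IS},Y_{T+1})$, and pass to the limit in the fixed $T$-dimensional expression. The only difference is that you spell out the rigor (strong law of large numbers conditional on the data, continuity of the resolvent $A\mapsto(zI+A)^{-1}$ for $z>0$) that the paper leaves implicit, which is a welcome but not substantively different addition.
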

Theorem \ref{th:lfm} characterizes the asymptotic behavior of the random-feature-based Large Factor Model studied in \citep{didisheim2024apt}. In the limit of infinitely many random features, the estimated SDF converges to a well-defined object governed entirely by the kernel $K$.

The LFM-SDF compares the current market state $Y_{T+1}$ with past realizations $Y_t,\ t=1,\cdots,T,$ using the Portfolio Kernel $\bK$. Periods that are more similar to the present receive higher weight, while less similar periods receive lower weight. These similarity scores are then aggregated in a mean-variance efficient manner through the weight vector $\xi\in \R^T$, which can be interpreted as optimal time fixed effects.

Although the kernel representation emphasizes aggregation over time, the LFM-SDF also admits a conventional portfolio representation in terms of characteristics-managed factors.

\begin{corollary}
We have
\begin{equation}
\lim_{P\to\infty}\theta'F_{T+1}\ =\ R_{T+1}'f^\bK(X_t)\,,
\end{equation}
where
\begin{equation}
f^\bK(x)\ =\ \sum_{t=1}^T\Big(\sum_{j=1}^{N_t} R_{j,t+1} K(x, X_{j,t})\Big)\,\xi_t\,.
\end{equation}
\end{corollary}
The representation of $f^\bK(x)$ provides a transparent interpretation of how the LFM extracts pricing information from the joint distribution of returns and characteristics. For any out-of-sample characteristic vector $x$, the model evaluates its similarity to past characteristics through the kernel $K(x,X_{j,t})$ and forms the similarity-weighted performance measure
\begin{equation}
Perf_t(x)\ =\ \sum_{j=1}^{N_t} R_{j,t+1} K(x, X_{j,t})\,.
\end{equation}
The final portfolio weight is obtained by optimally aggregating these performance measures across time,
\begin{equation}
f^\bK(x)\ =\ \sum_{t=1}^T Perf_t(x)\,\xi_t\,.
\end{equation}
In summary, once a kernel $K$ is specified, the analysis delivers an explicit closed-form SDF that efficiently aggregates an effectively infinite number of characteristics-based factors. This property motivates the term Large Factor Model. The analysis also highlights a limitation: the kernel $K$ is exogenously chosen rather than learned. In the next section, we show how fully trained neural networks endogenously learn such kernels through gradient-based optimization, leading to the Portfolio Tangent Kernel.

\section{The Portfolio Tangent Kernel}

We consider a parametric function family $f(X;\theta):\ \R^{N\times d}\to\R^N$ mapping
characteristics $X\in \R^{N\times d}$ to portfolio weights. For example, an
own-characteristics neural network $g(X_{i,t};\theta)$ studied in
\cite{gu2020empirical, chen2019deep} implies
$f(X_t;\theta)\ =\ (g(X_{i,t};\theta))_{i=1}^N$, while a transformer architecture as in
\cite{kelly2025artificial} yields a mapping that cannot be decoupled across stocks
because each coordinate depends on the characteristics of all assets. Throughout, we refer to the function family as a DNN, although it may originate from other families,
such as panel trees \citep{cong2025growing}.

\subsection{Optimization, Initialization, and Gradient Flow}

In practice, DNNs are trained by randomly initializing the parameter vector
$\theta=\theta(0)$ and then updating it using gradient-based optimization. This
initialization is economically relevant: in over-parameterized settings, many
parameter vectors achieve identical in-sample fit, so the learned SDF depends not only
on the objective function but also on the optimization path induced by the algorithm.

Letting $\eta>0$ denote the learning rate,  gradient descent updates the parameters according to
\begin{equation}\label{gr-d}
\theta(s+ds)\ =\ \theta(s) - \eta\, ds\, \nabla_{\theta} L(\theta(s))\,,
\end{equation}
where $s\ge 0$ indexes (rescaled) training time, or equivalently, the cumulative number of
gradient descent steps. In the limit of an infinitesimal step size, this recursion
converges to the gradient flow
\begin{equation}\label{n-ode}
\frac{d\theta(s)}{ds}\ =\ -\,\eta\, \nabla_{\theta} L(\theta(s))\,.
\end{equation}
The path $\{\theta(s)\}_{s\ge 0}$ therefore defines a sequence of candidate pricing rules
generated by the learning algorithm.

While the dynamics of the high-dimensional parameter vector $\theta(s)$ are generally
intractable, the evolution of the portfolio return generated by the network admits a
simpler characterization. Applying the chain rule yields
\begin{equation}\label{neural-dyn}
\frac{d}{ds}(R'f(x;\theta(s)))\ =\ -\,\eta\, R'\nabla_\theta f(x;\theta(s))
\nabla_{\theta} L(\theta(s))\,.
\end{equation}
Equation \eqref{neural-dyn} shows that gradient descent does not update the SDF directly.
Instead, learning operates through the gradient feature portfolio returns  
$R'\nabla_\theta f(x;\theta(s))$, which determine how pricing errors translate into
changes in portfolio weights. From an asset pricing perspective, these gradient feature portfolios define a large collection of characteristics-based payoffs that are implicitly traded
during training. This observation motivates a kernel representation that measures
similarity across market states through the interaction of returns with gradient
features.

\subsection{The Portfolio Tangent Kernel}

\begin{definition}[The Portfolio Tangent Kernel (PTK)]\label{ptk:def}
Let $R\in \R^N,\ \tilde R\in \R^{\tilde N}$ be two return vectors, and let
$X\in \R^{N\times P},\ \tilde X\in \R^{\tilde N\times P}$ denote their characteristics.
Define
\begin{equation}
\bK((R,X);(\tilde R,\tilde X);\theta)\ =\
\underbrace{R'}_{1\times N}
\underbrace{\nabla_\theta f(X;\theta)}_{N\times P}
\underbrace{\nabla_\theta f(\tilde X;\theta)'}_{P\times \tilde N}
\underbrace{\tilde R}_{\tilde N\times 1}
\end{equation}
to be the tangent kernel associated with the portfolio mapping $f(X;\theta)$. We refer
to this object as the {\it Portfolio Tangent Kernel} (PTK).
\end{definition}

The PTK is a pricing-relevant kernel defined on market states. It measures similarity
between states by aggregating returns using weights determined by the sensitivity of
portfolio payoffs to parameter perturbations. By construction, the PTK is a Portfolio
Kernel in the sense of Definition~\ref{def:pk}, with the underlying kernel given by the
Neural Tangent Kernel (NTK; \citealp{jacot2018neural}),\footnote{When the NTK is computed using the DNN after training, it is commonly referred to as the after-kernel. See, \cite{long2021properties, wei2022more, schwab2025training}.}
\begin{equation}\label{ntk-def}
K(X,\tilde X;\theta)\ =\ \nabla_\theta f(X;\theta)\,\nabla_\theta f(\tilde X;\theta)'\,.
\end{equation}
Accordingly,
\begin{equation}
\bK((R,X);(\tilde R,\tilde X);\theta)\ =\ R'K(X,\tilde X;\theta)\tilde R\,.
\end{equation}
While the NTK captures similarity across characteristics, the PTK lifts this structure
to traded returns and therefore governs pricing dynamics rather than prediction alone.

When $\theta$ is high-dimensional, the PTK aggregates a large collection of
characteristics-managed factors,
\begin{equation}\label{factor-ret}
F_{t+1}\ =\ (F_{k,t+1})_{k=1}^P,\qquad
F_{k,t+1}\ =\ R_{t+1}'\nabla_{\theta_k}f(X_t;\theta)\,.
\end{equation}
For any two market states $Y_t,Y_\tau$,
\begin{equation}\label{fac-sim-ntk}
\bK(Y_t;Y_\tau;\theta)\ =\ F_{t+1}'F_{\tau+1}\,,
\end{equation}
so that the PTK measures similarity across time in terms of the realized payoffs of
these learned factors.

\subsection{Feature Learning and PTK}

Equation \eqref{neural-dyn} implies that, under gradient flow, the portfolio return generated by the DNN evolves according to
\begin{equation}\label{neural-dyn-factors}
\frac{d}{ds}\big(R'f(X;\theta(s))\big)
\ =\ -\,\eta\,F_{t+1}'\nabla_{\theta} L(\theta(s))\,.
\end{equation}
Specializing to the MSRR objective yields the following result.

\begin{theorem}[MSRR Gradient Flow and the PTK]\label{thm:ptk} Suppose that 
\begin{equation}\label{main-11}
L(\theta)\ =\ \frac1T\sum_{t=1}^T(1\ -\ f(X_t;\theta)'R_{t+1})^2\,.
\end{equation}
Under gradient flow, 
\begin{equation}\label{ptk-dyn}
\frac{d}{ds}\big(R'f(X;\theta(s))\big)
\ =\ -\,\eta \, T^{-1}\underbrace{\bK(Y;Y_{IS};\theta(s))}_{PTK}\,
\underbrace{\big({\bf 1}_T - R_{IS}'f(X_{IS};\theta(s))\big)}_{in\text{-}sample\ error}\,,
\end{equation}
where $Y_{IS}=(R_{IS},X_{IS})$ is in-sample (IS) data, and 
\begin{equation}
\bK(Y;Y_{IS};\theta)\ =\ \big(\bK(Y;Y_t;\theta)\big)_{t=1}^T\in \R^{1\times T}\,.
\end{equation}
\end{theorem}
Theorem~\ref{thm:ptk} shows that gradient descent updates portfolio returns by comparing the current market state to past in-sample states through the PTK. Learning therefore
operates through two distinct channels:
\begin{itemize}
\item {\it Model learning}, captured by the evolution of $f(X;\theta(s))$, which
determines the portfolio weight function; and
\item {\it Feature learning}, captured by the evolution of the PTK
$\bK(Y;Y_{IS};\theta(s))$, reflecting learning of the gradient features
$\nabla_\theta f(X;\theta)$ used to construct the traded factors.
\end{itemize}
After sufficiently many gradient descent steps, these dynamics stabilize. This
occurs, for example, when $\theta(s)$ approaches a stationary point or when the network
is sufficiently wide \citep{jacot2018neural}. Once feature learning is complete, the PTK
ceases to evolve, and the dynamics admit a closed-form solution.\footnote{The proof of this result follows directly from Theorem \ref{thm:ptk}.}

\begin{theorem}[Closed-Form Stable-PTK DNN-SDFs]\label{main-th-ptk}
Fix a large $s_*$ and define
\begin{equation}
\bK^*\ =\ \bK(\cdot;\cdot;\theta(s_*))\,.
\end{equation}
Suppose that, after $s_*$ gradient descent steps, the PTK stabilizes in the sense that
\begin{equation}
\|\bK(Y;Y_{IS};\theta(s))-\bK^*(Y;Y_{IS})\|
+\|\bK(Y_{IS};Y_{IS};\theta(s))-\bK^*(Y_{IS};Y_{IS})\|
<\eps
\end{equation}
for all $s>s_*$. Suppose further that
\begin{equation}
\bK^*_{IS}\ \equiv\ \bK^*(Y_{IS};Y_{IS})
\end{equation}
is non-degenerate. Then, for all $s>s_*$,
\begin{equation}\label{decomposition-dnn}
R'f(X;\theta(s))\ =\ 
\underbrace{R'f(X;\theta(s_*))}_{learned\ model}
\ +\ 
\underbrace{\frac1T\bK^*(Y,Y_{IS})\,\xi(s)}_{learned\ features}
\ +\ O(\eps)\,,
\end{equation}
where
\begin{equation}
\xi(s)\ =\ 
\Big(\tfrac1T\bK^*_{IS}\Big)^{-1}
\Big(I-e^{-\eta (s-s_*) \tfrac1T\bK^*_{IS}}\Big)
\big({\bf 1}-R_{IS}'f(X_{IS},\theta(s_*))\big)\,.
\end{equation}
\end{theorem}
Theorem~\ref{main-th-ptk} delivers the central decomposition of the paper. Once the PTK
has stabilized, gradient descent no longer alters the feature representation learned by
the network. Instead, subsequent learning acts only through the aggregation of a fixed
collection of gradient-based factors. In this regime, the DNN supplies the features,
while pricing is governed entirely by the PTK.

This result makes precise the separation between {\it what is learned} and {\it how it
is priced}. The term labeled {\it learned model} captures the residual pricing rule
associated with the network at the stabilization point $s_*$. The second term—the
PTK-SDF—corresponds to an explicit factor pricing model that optimally aggregates the
learned features. Because this aggregation is efficient conditional on the features, the
learned-model component does not improve pricing performance and can be discarded
without loss.

The structure of $\xi(s)$ reveals that the PTK-SDF implements spectral shrinkage closely
related to ridge regression. To see this, consider the spectral decomposition of the
in-sample kernel matrix, $\bK^*_{IS}\ =\ UDU',\  D=\diag(\gl_i)_{i=1}^T\,.$ Then,
\begin{equation}
\begin{aligned}
&\Big(\tfrac1T\bK^*_{IS}\Big)^{-1}
\big(I-e^{-\tfrac1T\eta \bK^*_{IS}(s-s_*)}\big)
\ =\ U f_{GD}(D)U'\,,\\
&(zI+\tfrac1T\bK^*_{IS})^{-1}
\ =\ U f_{ridge}(D)U'\,,
\end{aligned}
\end{equation}
where
\begin{equation}
\begin{aligned}
&f_{GD}(\gl)\ =\ \gl^{-1}\big(1-e^{-\eta (s-s_*) \gl}\big)\,,\\
&f_{ridge}(\gl)\ =\ \gl^{-1}\frac{1}{1+\gl^{-1}z}\,,
\end{aligned}
\end{equation}
are the spectral filters induced by gradient descent (GD) and ridge regularization,
respectively. They both regularize the factor covariance matrix $\bK^*_{IS}$ by downweighting directions associated 
with small eigenvalues. The key distinction is that, under gradient descent, shrinkage is generated endogenously by the learning dynamics
rather than imposed ex ante through an explicit penalty. The effect of this shrinkage is most pronounced for small eigenvalues $\gl$, for which
\begin{equation}
\begin{aligned}
&f_{GD}(\gl)\ =\ \eta (s-s_*)\ +\ O(\gl)\,,\\
&f_{ridge}(\gl)\ =\ z^{-1}\ +\ O(\gl)\,.
\end{aligned}
\end{equation}
Thus, gradient descent induces an {\it effective} ridge penalty of order
$1/(\eta (s-s_*))$.

If the kernel matrix $\bK^*_{IS}$ is strictly positive definite, then as $s\to\infty$, the exponential term vanishes, and the PTK-SDF converges to the ridgeless kernel
regression
\begin{equation}
\bK^*(Y,Y_{IS})\,(\bK^*_{IS})^{-1}{\bf 1}\,.
\end{equation}
As we show in the next
section, although this estimator interpolates the in-sample data, it remains disciplined out of sample due to implicit spectral shrinkage induced by the PTK. The strength of this implicit regularization is governed by the spectral complexity of the PTK and plays a central role in shaping out-of-sample SDF performance.

\section{Alignment and the Projection Theorem}\label{sec:alignment}

Recent work emphasizes that limits to learning—the gap between the population-optimal
SDF and the model learned from finite samples \citep{chen2025limits}—are governed by two
fundamental properties of high-dimensional pricing models
\citep{didisheim2024apt, kelly2025understanding}. The first is the spectral distribution
of factor risk, which determines the severity of statistical shrinkage. The second is
{\it alignment}: the extent to which economically relevant risk premia load on
statistically strong directions of the factor space.

Random features are generically poorly aligned with the true SDF. Learned features, by contrast, may become substantially better aligned through training, allowing neural
networks to outperform despite operating in extremely high dimensions.

We formalize these ideas using tools from Random Matrix Theory (RMT). Let
$F\in \R^{T\times P}$ denote the matrix of factor returns, and define
\[
\bar F\ =\ \frac1T \sum_{t=1}^T F_t,\qquad
\hat\Sigma\ =\ \frac1T \sum_{t=1}^T F_tF_t'-\bar F\bar F'\,.
\]
Define the effective penalty
\begin{equation}\label{dfn:zstar}
\hat Z_*(z)\ =\ \frac{1}{T^{-1}\tr\!\big((zI+FF'/T)^{-1}\big)}\,.
\end{equation}

\begin{theorem}\label{thm:master}
Suppose $F_{t+1}=\mu_P+\gS_P^{1/2}X_t$, where $X_t$ has i.i.d.\ components with $E[X_{i,t}^{8+\eps}]<\infty$. Then,
\begin{equation}\label{aym-main}
(zI+\hat\gS)^{-1}\bar F\ -\ \frac{\hat Z_*(z)}{z}(\hat Z_*(z)I+\gS_P)^{-1}\mu_P\ \to\ 0
\end{equation}
in the weak sense as $P,T\to\infty$ with $P/T\to c>0$.
\end{theorem}

Theorem \ref{thm:master} implies that all estimation noise collapses asymptotically to the scalar $\hat Z_*(z)$. When $P>T$,
$\hat Z_*(0)>0$, implying nontrivial shrinkage even for ridgeless estimators. 

Let $\gS=UDU'$ be the population eigenvalue decomposition. Then
\[
(\hat Z_*(z)I+\gS)^{-1}\mu
=
\sum_{i=1}^P
\frac{U_i'\mu}{\gl_i}\,
\frac{1}{1+\hat Z_*(z)\gl_i^{-1}}\,U_i\,,
\]
and we arrive at the following result. 

\begin{theorem}[Projection Theorem]\label{project}
If $\hat Z_*(z)$ separates the spectrum so that $\hat Z_*(z)/\gl_i=O(\eps)$ for $i\le I$ and
$\gl_i/\hat Z_*(z)=O(\eps)$ for $i>I$, then
\[
(\hat Z_*(z)I+\gS)^{-1}\mu
\ =\ P_I(\gS^{-1}\mu)
\ +\ O(\eps)\ ,\ P_I\ =\ \sum_{i=1}^I U_iU_i'\,. 
\]
\end{theorem}
The economic intuition behind Theorem \ref{project} is as follows. Finite samples force investors to downweight directions of factor risk that are
difficult to estimate. This shrinkage intensifies as dimensionality increases. What
matters for performance is therefore not the number of factors, but whether economically
relevant risk premia concentrate in directions that survive shrinkage. Random Features
spread risk premia across many weak directions and are poorly aligned. Learned features,
by contrast, can rotate the factor space so that economically important variation loads
on strong principal components. 
Alignment determines how much of the population Sharpe ratio survives finite-sample
shrinkage. This naturally raises the question of how to estimate $\mu'\gS^{-1}\mu$ when
$P/T$ is non-negligible.

In low dimensions, $\mu'\gS^{-1}\mu$ is usually estimated by the \citep{gibbons1989test} (henceforth, GRS) statistic.
In high dimensions, however, this estimator is biased. \cite{chernov2025test} show that
the bias can be corrected using the same effective penalty $\hat Z_*(z)$ that governs
limits to learning:
\begin{equation}\label{debiased}
\hat W^{debiased}(z)
=
\frac{\bar F'(zI+\hat\gS)^{-1}\bar F-(z^{-1}\hat Z_*(z)-1)}{z^{-1}\hat Z_*(z)}
\ \approx\
\mu'(\hat Z_*(z)I+\gS)^{-1}\mu\,.
\end{equation}
The debiased statistic $\hat W^{debiased}(z)$ therefore provides a natural finite-sample 
measure of economically relevant alignment. In the empirical analysis below, we apply 
this statistic to both random features and PTK-based learned features.

\section{Empirics}\label{sec:empirics}

This section evaluates the empirical implications of the Portfolio Tangent Kernel (PTK) framework using U.S.\ equity data. We compare the performance of three classes of stochastic discount factors (SDFs): (i) fully trained deep neural network (DNN) SDFs, (ii) random-feature SDFs following \cite{didisheim2024apt}, and (iii) PTK-based SDFs that optimally price features learned by the DNN. Our analysis focuses on out-of-sample pricing performance, factor spanning, statistical robustness, and economic alignment with long-horizon consumption risk. All results are reported using rolling-window implementations and are evaluated separately across market capitalization groups.

\subsection{Data and Random Feature Portfolios}

We use monthly data on 153 stock characteristics for U.S.\ equities over the period 1963--2024 from \citep*[hereafter JKP]{jensen2023there}. This open-source dataset
collects a comprehensive set of firm-level characteristics drawn from the asset pricing
literature.\footnote{JKP characteristics for individual stocks are available at
\url{https://wrds-www.wharton.upenn.edu/pages/get-data/contributed-data-forms/global-factor-data/}.
Detailed documentation and additional factor portfolio data are provided at
\url{jkpfactors.com}.}
The universe consists of NYSE, AMEX, and NASDAQ stocks with CRSP share codes 10--12.

To construct a stock--month panel with a stable set of characteristics, we follow the
filtering procedure of \cite{didisheim2024apt}. First, we restrict attention to a subset
of 132 characteristics with fewer than 30\% missing observations over the full sample.
Second, we exclude nano stocks whose market capitalization falls below the 1st percentile
of NYSE market capitalizations. Third, we drop stock--month observations with missing
values for more than one-third of the characteristics. Finally, each characteristic is
rank-standardized cross-sectionally to the interval $[-0.5,0.5]$, and remaining missing
values are imputed using the cross-sectional median, which equals zero by construction.
After this procedure, the resulting $N_t\times132$ matrix of characteristics at each date
constitutes the conditioning set $X_t$. Let $R_{t+1}=(R_{i,t+1})_{i=1}^{N_t}$ denote the
vector of next-month excess returns.

To isolate the role of statistical complexity while abstracting from liquidity concerns, we conduct the analysis separately across market capitalization groups. Following JKP,
we study mega (largest 20\%), large (80\%--50\%), small (50\%--20\%), and micro (20\%--1\%)
stocks based on NYSE breakpoints each period, as well as the full universe of stocks above
the 1\% NYSE cutoff, which we refer to as ``all.''

As a benchmark, we construct random-feature-based SDFs following
\cite{didisheim2024apt}. Specifically, we define
\[
S_t={\rm ReLU}(X_t\gamma)\ \in\ \R^{N_t\times P},
\]
where $\gamma=(\gamma_{j,k})\in\R^{d\times P}$ with $P=25{,}000$ and
$\gamma_{j,k}\sim N(0,1)$ independently. These nonlinear transformations constitute the
{\it random features} used in the analysis of \cite{didisheim2024apt}. We then form factor returns
\begin{equation}\label{non-linearDKKM}
F^{\mathrm{ReLU}}_{t+1}
\;=\;
N_t^{-1/2}\,R_{t+1}'S_t\,.
\end{equation}

\noindent\textit{Normalization.} Throughout the paper, all factor returns are normalized by $N_t^{-1/2}$, where $N_t$ denotes the cross-sectional sample size at time $t$. This scaling ensures that factor second moments remain stable across time and are comparable across periods with different cross-sectional dimensions. For the DNN and PTK constructions, this normalization is implemented directly in the factor definitions, ensuring scale comparability with the random feature benchmark.

We use these normalized factor returns to estimate a candidate stochastic discount factor as the return on a Markowitz portfolio computed over a rolling window of $T$ months with ridge penalty $z$:
\begin{equation}\label{R-SDFeLu}
R^{\mathrm{ReLU}}_{t+1}(z,T)\ =\ \theta_t(z;T)'F^{\mathrm{ReLU}}_{t+1},
\end{equation}
where $\theta_t(z;T)$ is obtained from \eqref{msrr-th}.

Because the impact of the ridge penalty depends on its magnitude relative to the factor spectrum, we scale $z$ proportionally to the average eigenvalue of the factor covariance
matrix,
\begin{equation}\label{z=zeff}
z\ =\ z_{eff} P^{-1}\tr(F'F/T)\,.
\end{equation}
We consider rolling estimation windows $T=12,60,120,240,360$ months and set
$z_{eff}=10^{-5}$. The resulting ReLU-based SDF returns are combined into an ensemble
benchmark using weights inversely proportional to their realized volatility over the
preceding 12 months. Importantly, these volatility weights are computed using only
information available at time $t$.

Formally, the ensemble random feature SDF is given by
\begin{equation}\label{the-ReLU-bench}
\bar R^{\mathrm{ReLU}}_{t+1}(z_{eff})
\;=\;
\sum_{T\in\{12,60,120,240,360\}}
\omega_{t,T}\,
R^{\mathrm{ReLU}}_{t+1}(z_{eff} P^{-1}\tr(F'F/T),T),
\end{equation}
where $\omega_{t,T}\propto \widehat{\sigma}^{-1}_{t,T}$ and
$\widehat{\sigma}_{t,T}$ denotes the realized volatility of
$R^{\mathrm{ReLU}}_{\tau+1}(z_{eff},T),\ \tau\le t-1,$ computed over the trailing
12 months.

\subsection{Training the DNN and the PTK-SDF}

We consider a multilayer perceptron (MLP) architecture with an input layer of dimension
132, corresponding to the number of characteristics, followed by $D$ hidden layers of
width $w$. We consider depths $D=1,2,4$ and widths $w=16,32,64,128,256,512$. The parameter
$w$ controls network width, while $D$ controls depth. We denote the resulting function
family by $f(X;\theta;D;w)$.

The dimension of the parameter vector $\theta$ is given by
\[
P \ =\ \underbrace{(d_{in}+1)\times w}_{\text{Input layer}}
\ +\ \underbrace{(D-1)\times (w+1)\times w}_{\text{Hidden layers}}
\ +\ \underbrace{(w+1)}_{\text{Output layer}},
\]
see Definition~\ref{def:mlp}. The deepest and widest network in our analysis contains
$856{,}577$ parameters. Our DNN--PTK duality implies that $P$ is the relevant measure of
model complexity: it equals both the number of factors in the associated PTK and the
interpolation capacity of the network. In particular, by Theorem~\ref{thm:ptk}, the
DNN-SDF can interpolate $T$ observations—achieving in-sample returns equal to one for
$t=1,\ldots,T$—if and only if $P\ge T$, provided the PTK matrix is non-degenerate.

We train the DNN using gradient descent as described in
Appendix~\ref{sec:experiments}, employing a rolling estimation window of 60 months and a
learning rate $\eta=2^{-16}$. At each time $t$, the network is trained using only data
available up to time $t$. To obtain a common set of learned features, we train the network
on the full stock universe (``all'') and obtain a time-$t$ parameter vector
$\theta_t^*$. We then evaluate the resulting SDF separately on each size group (mega,
large, small, and micro).

The resulting DNN-based SDF is
\begin{equation}\label{def:rdnn}
R^{\mathrm{DNN}}_{t+1}(D;w)
\;=\;
N_t^{-1/2}\,R'_{t+1} f(X_t;\theta^*_t;D;w)\,.
\end{equation}

We next construct SDFs based on the Portfolio Tangent Kernel. Given a trained network
$f(X;\theta_t^*;D;w)$, we form factor returns using the learned gradient features as in
\eqref{factor-ret},
\begin{equation}
F_{\tau+1}(\theta_t^*;D;w)
\ =\ N_\tau^{-1/2}\,
R_{\tau+1}'\underbrace{\nabla_\theta f(X_\tau;\theta_t^*;D;w)}_{time\text{-}t\ learned\ features}
\ \in\ \R^{P},\qquad P=\dim(\theta)\,.
\end{equation}
Thus, $F_{\tau+1}(\theta_t^*;D;w)$ represents returns on the time-$t$ vintage of
characteristics-managed portfolios constructed from features learned at time $t$.

The corresponding time-$t$ Portfolio Tangent Kernel (PTK) is defined as
\begin{equation}
\bK\big((R_{\tau_1+1},X_{\tau_1});(R_{\tau_2+1},X_{\tau_2});\theta_t^*;D;w\big)
\ =\ F_{\tau_1+1}(\theta_t^*;D;w)' F_{\tau_2+1}(\theta_t^*;D;w)\,.
\end{equation}

Fixing a rolling window length $T$, we estimate the ridge-penalized Markowitz portfolio
\eqref{msrr-th} using factor returns
$F_{\tau+1}(\theta_t^*;D;w)$ for $\tau=t-T,\ldots,t-1$. Denoting the resulting portfolio
weights by $\Theta_t(\theta_t^*;D;w;T;z)$, the associated PTK-based SDF is
\begin{equation}\label{learned-feat-1}
R^{\mathrm{PTK}}_{t+1}(D;w;T;z)
\ =\
\Theta_t(\theta_t^*;D;w;T;z)'F_{t+1}(\theta_t^*;D;w)\,.
\end{equation}

As with the random feature benchmark, we construct an ensemble PTK-SDF by averaging across
rolling windows,
\begin{equation}\label{the-ptk-bench}
\bar R^{\mathrm{PTK}}_{t+1}(D;w;z_{eff})
\;=\;
\sum_{T\in\{12,60,120,240,360\}}
\omega_{t,T}^{\mathrm{PTK}}\,
R^{\mathrm{PTK}}_{t+1}(D;w;T;z_{eff} P^{-1}\tr(F'F/T)),
\end{equation}
where $\omega_{t,T}^{\mathrm{PTK}}\propto \widehat{\sigma}^{-1}_{t,T}$ and
$\widehat{\sigma}_{t,T}$ denotes the realized volatility of
$R^{\mathrm{PTK}}_{\tau+1}(D;w;T;z),\ \tau\le t-1,$ computed over the trailing 12 months.

This aggregation exploits potential non-stationarity by allowing the pricing rule to adapt across time scales. Importantly, this flexibility is specific to the PTK
construction and entails negligible computational cost, as estimating Markowitz
portfolios for multiple $T$ is numerically trivial.

\subsection{The Performance of $R^{\rm ReLU}, R^{\rm DNN},$ and $R^{\mathrm{PTK}}$}

The random feature portfolios $R^{\mathrm{ReLU}}_{t+1}$ defined in
\eqref{R-SDFeLu} correspond to a shallow (single hidden layer) neural network
with fixed, randomly generated features. Despite the absence of feature
learning, \cite{didisheim2024apt} document strong empirical performance of this
nonlinear SDF, driven by sheer statistical complexity: the model effectively
prices returns using $P=25{,}000$ nonlinear factors. This benchmark, therefore,
provides a stringent baseline against which to evaluate fully trained neural
networks.

Our first objective is to assess whether a trained DNN is able to {\it learn
economically relevant features}—that is, priced nonlinear characteristics that
are not already captured by the random feature SDF.

Figure~\ref{fig:mlp_sharpe_width_depth} reports the Sharpe ratios of
$R^{\rm DNN}_{t+1}(D;w)$ from \eqref{def:rdnn} as a function of network depth and
width, alongside the ensembled ReLU benchmark
$\bar R^{\mathrm{ReLU}}$ from \eqref{the-ReLU-bench}. The patterns are similar across size groups; we therefore focus on the full stock universe (``all'').

Several findings stand out. First, depth is essential. Shallow networks
($D=1$) perform poorly and substantially underperform the random feature
benchmark. Second, deeper networks are capable of extracting more structure:
For example, the depth-2, width-256 network attains a Sharpe ratio of 3.8,
slightly exceeding the ReLU benchmark (3.7). This architecture is highly
complex, with $P=100{,}097$ trained parameters. Nevertheless, even such large
networks generally fail to outperform the random feature SDF across all size
groups, despite having four times as many parameters.

Taken at face value, these results might suggest that fully trained DNNs add
little value for asset pricing. This conclusion, however, is misleading. As
emphasized by \cite{kelly2025understanding}, raw Sharpe ratios alone are not an
appropriate metric for evaluating machine learning models. What matters is
whether the model uncovers {\it new priced variation}—that is, whether it
discovers risk premia that are unspanned by existing models.

To separate {\it what} the DNN learns from {\it how} it prices, we test the
spanning hypothesis by estimating the univariate regressions on the \eqref{the-ReLU-bench} benchmark with $z_{eff}=10^{-5}:$
\begin{equation}\label{main-regr0}
R_{t+1}^{\rm DNN}(D;w)
\ =\
\ga
\ +\
\beta\,\bar R^{\mathrm{ReLU}}_{t+1}(10^{-5})
\ +\
\eps_{t+1}\,.
\end{equation}
Figure~\ref{fig:mlp_alpha_tstat_comparison} reports the $t$-statistics of the
intercepts.

The results strongly support the feature-learning hypothesis. For moderately complex deep architectures (e.g., $D=2$ and $D=4$ with $w=256$), the estimated alpha is positive and highly statistically significant. In contrast,
very small networks fail to uncover new priced characteristics, while extremely
large networks (depth 4 with a width above 256, corresponding to
$P>200{,}000$) sometimes fail to do so reliably.\footnote{Training such large
networks may require alternative optimization schemes or architecture-adjusted
hyperparameters; see \cite{yang2022tensor}. We leave this issue for future work.}

Economically, these findings imply that moderately complex DNNs
($P\approx100{,}000$) are effective at learning {\it what to price}: they
discover a rich set of priced nonlinear characteristics that are not spanned by
random features. However, they are less effective at learning {\it how to price}
these characteristics, resulting in suboptimal Sharpe ratios despite strong
alpha.

Motivated by these results, we focus on the architecture
$R^{\rm DNN}(D=2,w=256)$ in the remainder of the analysis. This specification
exhibits the strongest and most robust feature-learning performance across all
size groups, as illustrated in
Figure~\ref{fig:mlp_alpha_tstat_comparison}.

We now turn to our central theoretical prediction: the Portfolio Tangent Kernel
(PTK) should optimally price the features learned by the DNN. If the learned
characteristics subsume those captured by the ReLU benchmark, and if the PTK
implements an efficient pricing rule, then
$\bar R^{\rm PTK}(D=2,w=256)$ should outperform both $R^{\rm DNN}$ and
$\bar R^{\mathrm{ReLU}}$.

Figure~\ref{fig:ptk_sharpe_lambda} confirms this prediction. The PTK-SDF delivers
substantially higher Sharpe ratios than both benchmarks across all size groups.
The gains are particularly pronounced for mega and large stocks: PTK achieves Sharpe
ratios of 1.6 and 1.9, compared to 1.3 and 1.6 for the ReLU benchmark.

A striking implication is that the {\it ridgeless} PTK attains the highest
Sharpe ratios across all size groups. 
These findings underscore the role of statistical complexity: when low-variance
principal components contain an economically meaningful signal—as suggested by the
strong alpha evidence in Figure~\ref{fig:mlp_alpha_tstat_comparison}—explicit shrinkage becomes unnecessary.

Figure~\ref{fig:alpha_tstat_comparison} further shows that the outperformance of
PTK is statistically significant. The PTK-SDF delivers large and significant
alpha relative to both $R^{\rm DNN}$ and $\bar R^{\mathrm{ReLU}}$, indicating that it
fully subsumes their pricing information.

We therefore conclude that feature learning and pricing should be viewed as
distinct tasks. DNNs excel at discovering rich nonlinear characteristics, while
the PTK provides a statistically efficient mechanism for pricing them. The PTK
representation renders the learned SDF amenable to the Random Matrix Theory
analysis developed in Section~\ref{sec:alignment}, to which we now turn.

\begin{figure}[htbp]
\centering
\includegraphics[width=0.8\textwidth]{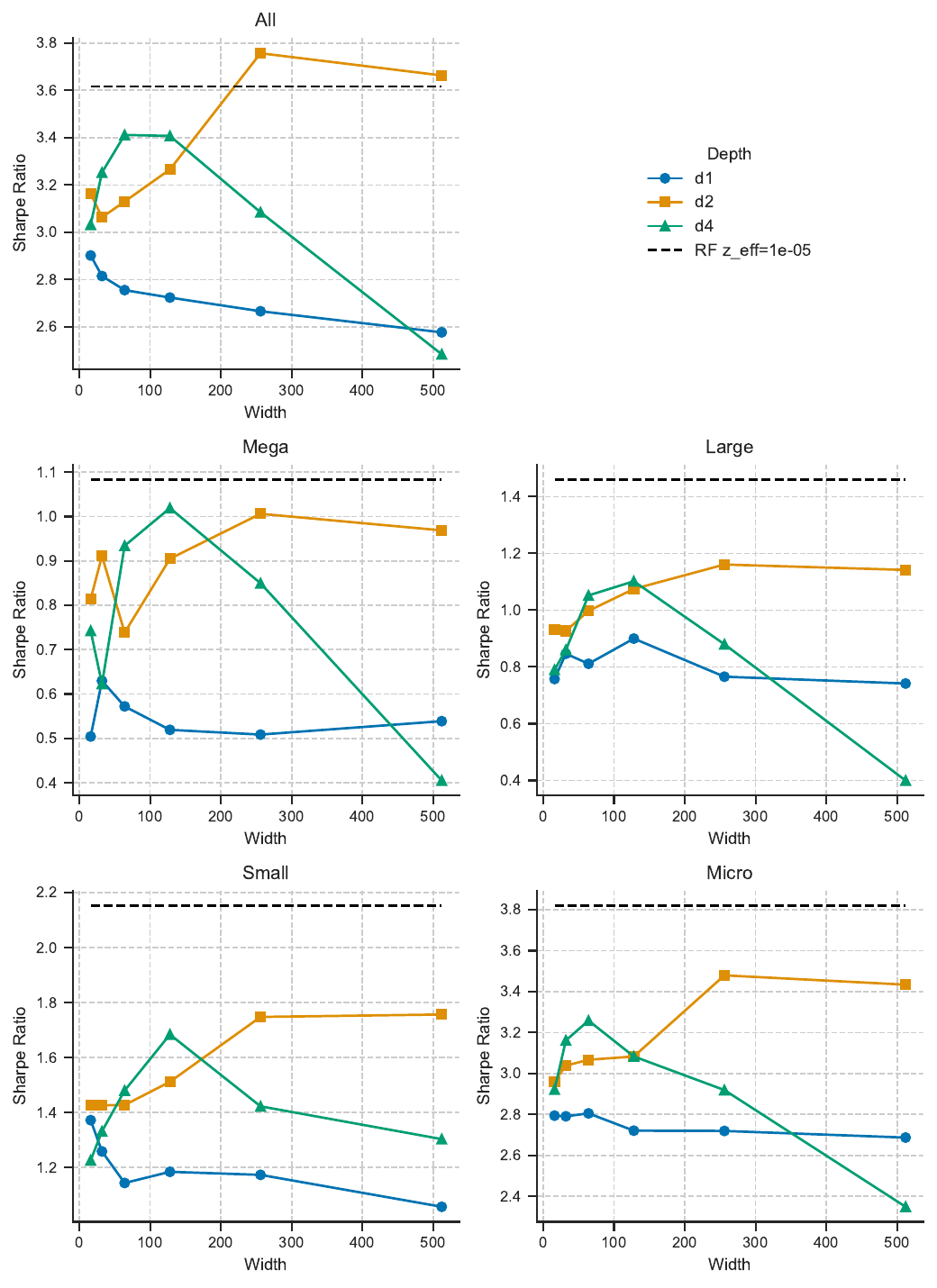}
\caption{Sharpe Ratio of $R^{\rm DNN}$ from \eqref{def:rdnn} as a Function of Model Complexity (depth $D$ and width $w$). The lines $d1,d2,d4$ correspond to $D=1,2,$ and $4$, respectively, with widths $w \in \{2^4, \dots, 2^9\}$. The dashed line is the Sharpe ratio of the $\bar R^{\mathrm{ReLU}}$ benchmark from \eqref{the-ReLU-bench} with  $z_{eff}=10^{-5}$.} 
\label{fig:mlp_sharpe_width_depth}
\end{figure}

\begin{figure}[htbp]
\centering
\includegraphics[width=0.8\textwidth]{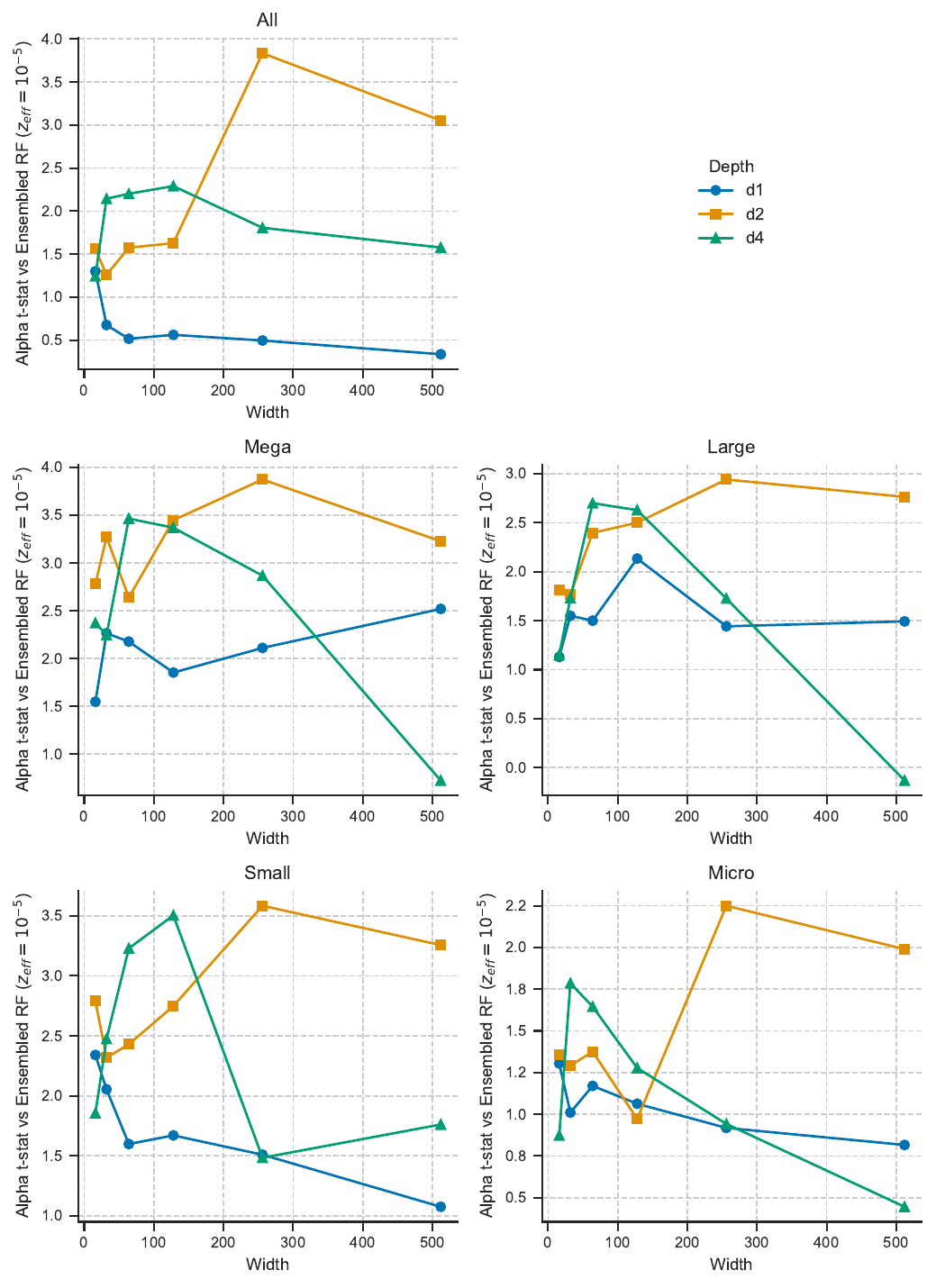}
\caption{Alpha t-statistics from the regressions \eqref{def:rdnn} as a Function of Model Complexity (depth $D$ and width $w$). The lines $d1,d2,d4$ correspond to $D=1,2,$ and $4$, respectively, with widths $w \in \{2^4, \dots, 2^9\}$. }
\label{fig:mlp_alpha_tstat_comparison}
\end{figure}

\begin{figure}[htbp]
\centering
\includegraphics[width=0.8\textwidth]{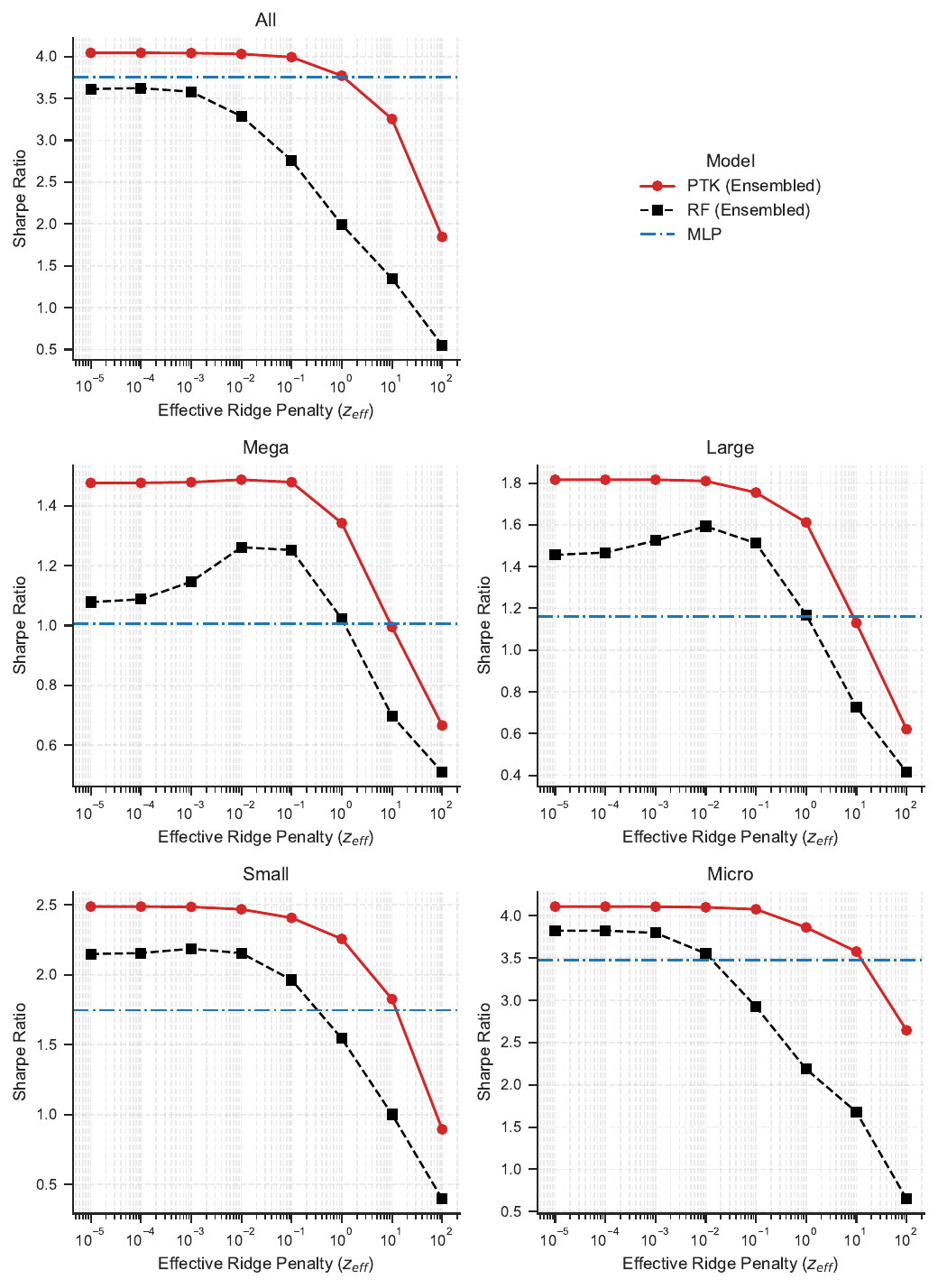}
\caption{Sharpe Ratios of $\bar R^{\rm PTK}(D=2,w=256;z_{eff})$ from \eqref{the-ptk-bench} and $\bar R^{\rm ReLU}(z_{eff})$ from \eqref{the-ReLU-bench} (both are ensembled across $T$), as a function of the $z_{eff}.$ The blue dash-dotted line represents the Sharpe Ratio of $R^{\rm DNN}(D=2,w=256)$ benchmark from \eqref{def:rdnn}.}
\label{fig:ptk_sharpe_lambda}
\end{figure}

\begin{figure}[htbp]
\centering
\includegraphics[width=0.8\textwidth]{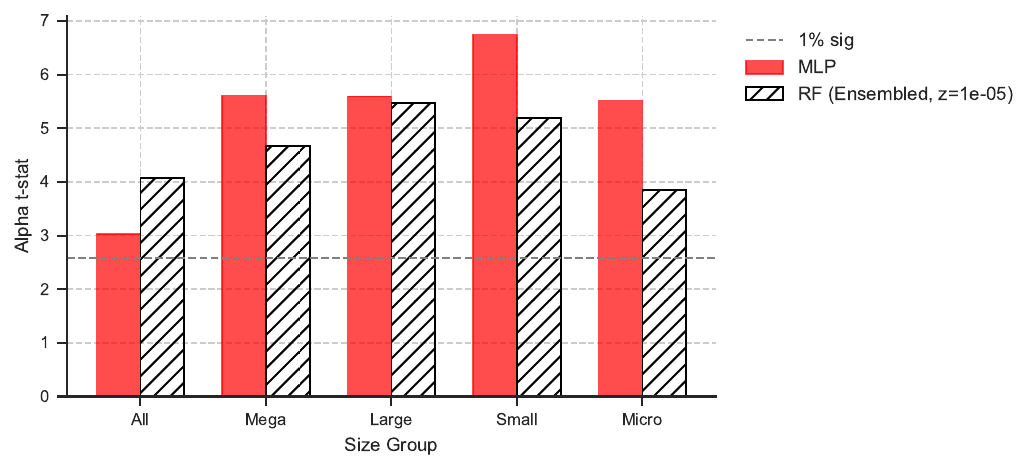}
\caption{The t-statistics of alpha from univariate regressions of $\bar R^{\rm PTK}(D=2,w=256)$ with $z_{eff}=10^{-5}$ (see \eqref{the-ptk-bench}) on $R^{\rm DNN}(D=2,w=256)$ from \eqref{def:rdnn} (red bars) and $\bar R^{\textrm{ReLU}}$ from \eqref{the-ReLU-bench} (grey striped bars). The dashed horizontal line indicates the level of 1\% statistical significance.}
\label{fig:alpha_tstat_comparison}
\end{figure}
This out-performance is also statistically significant, as is shown by Figure \ref{fig:alpha_tstat_comparison}. We therefore conclude that learned features (PTK) entirely subsume the performance of the DNN.  In turn, the PTK representation of the candidate SDF makes it amenable to the RMT analysis from Section \ref{sec:alignment}.

\subsection{The Dynamics of Spectral Complexity}

By Theorem~\ref{thm:master}, a central determinant of LFM performance is the
effective shrinkage $\hat Z_*(z)$ defined in \eqref{dfn:zstar}. In the PTK
setting, we adopt a scaled version of \eqref{z=zeff} and set
\begin{equation}
z\ =\ z_{eff}\,T^{-1}\tr(FF'/T)\,,
\end{equation}
which is proportional to \eqref{z=zeff}. Since $\hat Z_*(z)$ scales linearly with the kernel matrix $K=FF'$, we normalize the effective ridge penalty and define
\begin{equation}\label{z+1}
\frac{\hat Z_*(z)}{T^{-2}\tr(K)}
\ =\
\frac{1}{T^{-1}\tr\!\big((z_{eff}T^{-2}\tr(K) I+K/T)^{-1}\big)\,\cdot\,T^{-2}\tr(K)}\,.
\end{equation}
By Jensen’s inequality, $T^{-1}\tr(A^{-1})\ \ge\ (T^{-1}\tr(A))^{-1}$ for any
symmetric positive definite $T\times T$ matrix $A$, which implies
\begin{equation}\label{z+1ineq}
\frac{\hat Z_*(z_{eff}\,T^{-2}\tr(K))}{T^{-2}\tr(K)}
\ \le\ 1+z_{eff}\,.
\end{equation}
The normalized quantity in \eqref{z+1} measures the fraction of total factor
variance effectively shaded by implicit regularization in an
over-parameterized LFM. The upper bound in \eqref{z+1ineq} is attained when
$K\approx I$, corresponding to maximal spectral complexity, where all principal
components contribute equally. We therefore refer to \eqref{z+1} as the
{\it spectral complexity} of the kernel matrix.

Figure~\ref{fig:ptk_implicit_grs} reports the time-series dynamics of spectral
complexity (left panel) and the debiased GRS statistic \eqref{debiased} (right
panel) for both random-feature-based factors \eqref{non-linearDKKM} and PTK-based
factors. Several striking empirical regularities emerge.

First, PTK factors exhibit spectral complexity roughly an order of magnitude
larger than that of random features. Rather than concentrating risk in a small
number of dominant directions, the learned representation spreads economically
relevant variation across a large number of principal components. This finding
underscores that feature learning does not simplify the factor space; instead,
it reorganizes it.

Second, PTK spectral complexity rises monotonically after 2000 and increases by
approximately a factor of six over the sample period. In contrast, the spectral
complexity of random feature factors remains essentially flat. The increase is
therefore specific to learned, priced features and reflects a gradual rise in
the dimensionality of economically relevant risk.

Third, the debiased GRS statistic \eqref{debiased} rises steadily until the early
2000s and declines after 2010. Importantly, this decline is not accompanied by a
deterioration in realized SDF performance (Figure~\ref{fig:ptk_cumulative_returns} in the Appendix).
Combined with the sharp increase in $\hat Z_*(z)$, this pattern suggests that
the population Sharpe ratio $\mu'\gS^{-1}\mu$ has continued to grow, while
finite-sample limits to learning have intensified due to rising spectral
complexity.

To further investigate alignment in the sense of Theorem~\ref{project}, we
conduct a principal-component truncation exercise. For each $K$, we construct
truncated versions of $R^{\mathrm{PTK}}_{t+1}(D_*;w_*;T,z)$ and
$R^{\mathrm{ReLU}}_{t+1}(T,z)$ by retaining only the top $K$ eigenvectors of the
sample covariance matrix
\begin{equation}\label{eigen-dec}
\hat\gS_t
\ =\
T^{-1}\sum_{\tau=t-T}^{t-1}F_{\tau+1}F_{\tau+1}'
\ =\
\hat U_t \hat D_t \hat U_t'\,,
\end{equation}
and forming portfolios based on 
\begin{equation}\label{ftk}
\hat F_t(K)\ =\ \hat U_t[:,:K]'\hat F_t.
\end{equation}
Figure~\ref{fig:topk_sharpe_comparison} plots Sharpe ratios as a function of $K$
for learned gradient features and random features. Consistent with the alignment
mechanism of Section~\ref{sec:alignment}, learned features concentrate
economically relevant signal in far fewer directions. For the full stock
universe, the top $15$ (out of $60$) PTK principal components already achieve a
Sharpe ratio of $3.8$, close to the full-feature maximum of $3.9$. By contrast,
random features require substantially larger $K$ to extract the signal and reach
only $2.3$ at the same truncation level.

Taken together, these results highlight a central tension. Feature learning
increases spectral complexity and exacerbates limits to learning, yet
simultaneously improves alignment by concentrating the economic signal in
statistically robust directions. The empirical success of PTK-based pricing
stems precisely from this trade-off.

\begin{figure}[htbp]
\centering
\includegraphics[width=0.75\textwidth]{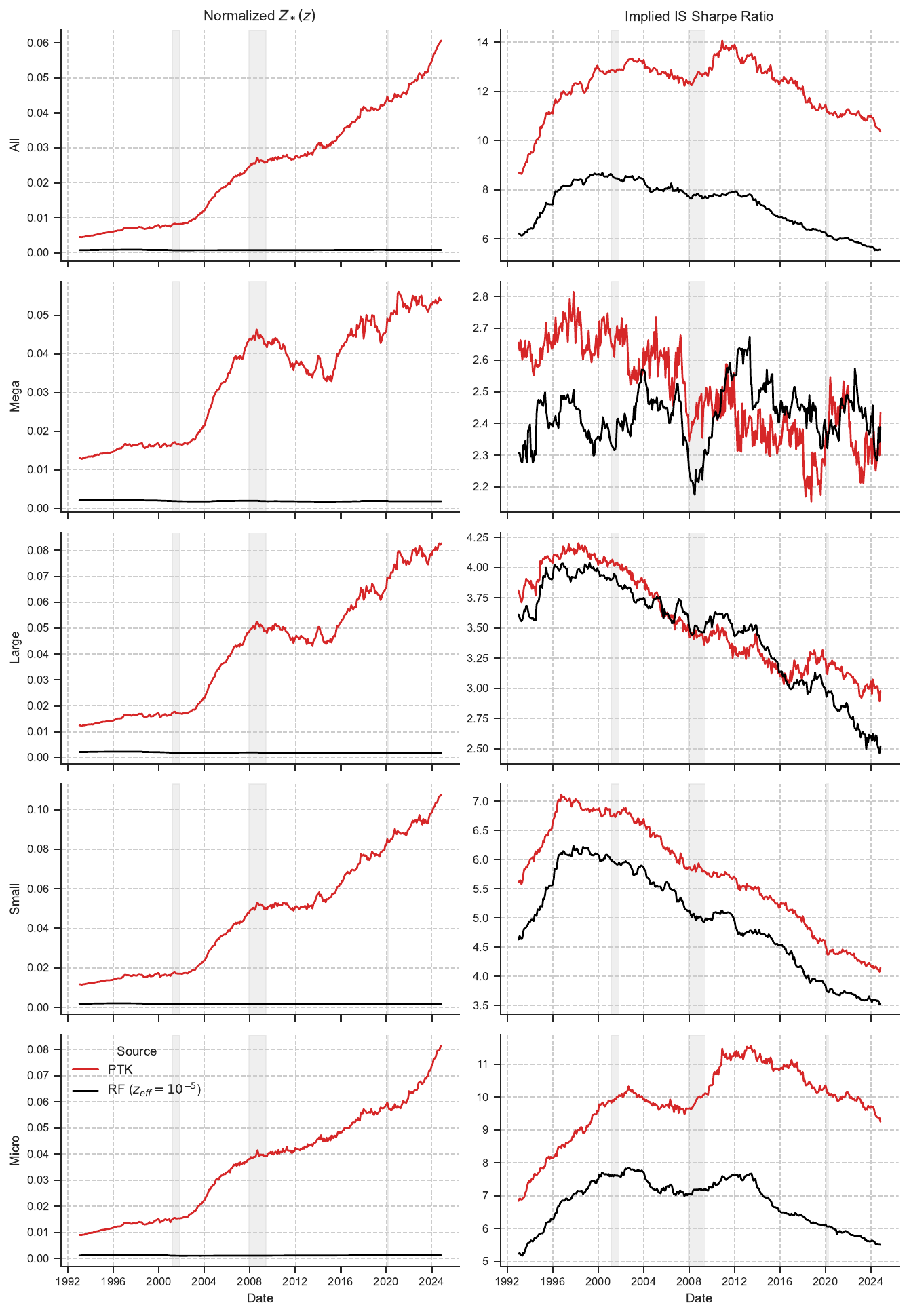}
\caption{The dynamics of spectral complexity \eqref{z+1} and the debiased GRS statistic \eqref{debiased} estimated using $T=360$ for $R^{\rm PTK}$ and $R^{\textrm{ReLU}}$; we use $D=2,\ w=256.$}
\label{fig:ptk_implicit_grs}
\end{figure}

\begin{figure}[htbp]
\centering
\includegraphics[width=0.8\textwidth]{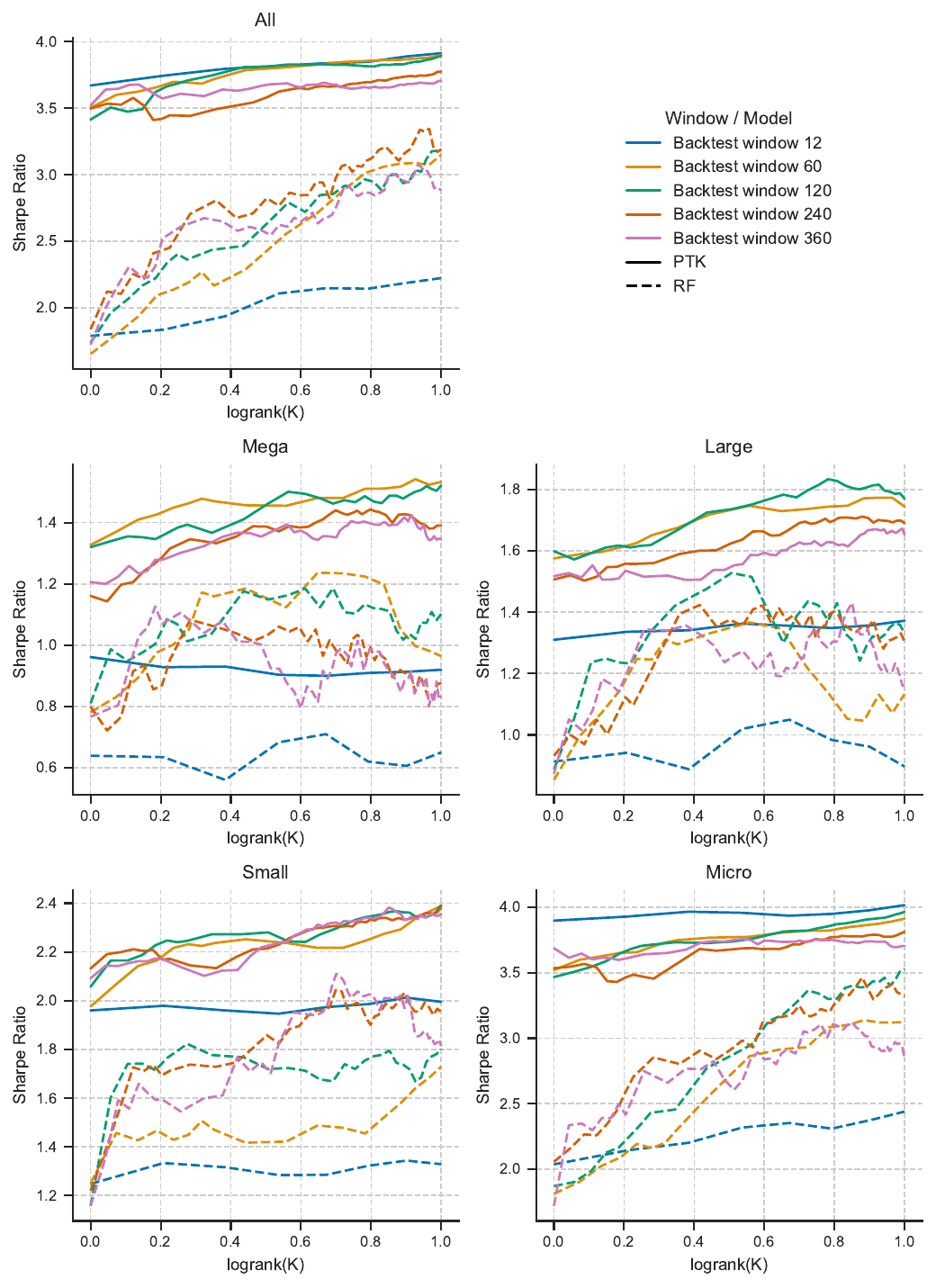}
\caption{Factor Alignment. This figure plots the Sharpe Ratio against the number of principal components ($K$) for PTK Factors and random feature factors, penalized with $z_{eff}=10^{-5}$, for different backtest window sizes $T=12, 60, 120, 240, 360$.}
\label{fig:topk_sharpe_comparison}
\end{figure}

\subsection{PTK Alignment with Consumption Risk}

Modern asset pricing theories emphasize long-run and future consumption risk as key drivers of equity premia \citep{bansal2004risks,campbell2018intertemporal}. In these models, the stochastic discount factor (SDF) loads on future consumption realizations rather than contemporaneous consumption growth.

\cite{ParkerJulliard2005} show that covariances with future realized consumption are priced in the cross-section of stock returns, while \cite{MalamudWangZhang2025} provide a micro-founded version based on financial frictions. These frameworks offer economically disciplined benchmarks for evaluating whether machine-learning-based SDFs capture meaningful macroeconomic risk.

Our alignment framework implies that economically relevant pricing information should concentrate in the leading principal components of the factor space (Theorem~\ref{project}). We therefore examine whether the dominant components of PTK-based factors exhibit stronger alignment with future consumption risk than those of random features or fully trained DNN-SDFs.

We project $\bar R^{\rm PTK}$ and $\bar R^{\rm ReLu}$ onto their top two principal components \eqref{ftk} and compute correlations and predictive regressions with future consumption growth, following \cite{ParkerJulliard2005} and \cite{MalamudWangZhang2025}.\footnote{Results without principal-component projection are weaker, indicating that lower-variance PTK components primarily capture variation orthogonal to consumption risk. See Figure \ref{fig:links_to_ccapm} in the Appendix.}

Figure~\ref{fig:links_to_ccapm_K2} reports the results. Across all size groups, the PTK-SDF exhibits substantially stronger alignment with both consumption-based benchmarks than either the DNN-SDF or the random-feature benchmark. PTK delivers larger absolute correlations, higher regression $t$-statistics, and materially greater explanatory power. For the aggregate stock universe, the projected PTK-SDF is approximately $-30\%$ correlated with future consumption growth under \cite{ParkerJulliard2005} and nearly $-50\%$ under \cite{MalamudWangZhang2025}, compared to correlations around $-10\%$ and $-20\%$, respectively, for the alternatives.

These findings highlight the economic role of PTK. Gradient-based feature learning rotates the high-dimensional characteristic space so that consumption-related risk loads on statistically strong directions that survive shrinkage. Random features fail to achieve this alignment, while unconstrained DNN-SDFs do not price it efficiently.

Overall, PTK isolates the economically relevant component of feature learning and translates it into a disciplined pricing rule, yielding tradable SDFs that are tightly linked to long-horizon consumption risk.

\begin{figure}[htbp]
\centering
\includegraphics[width=0.8\textwidth]{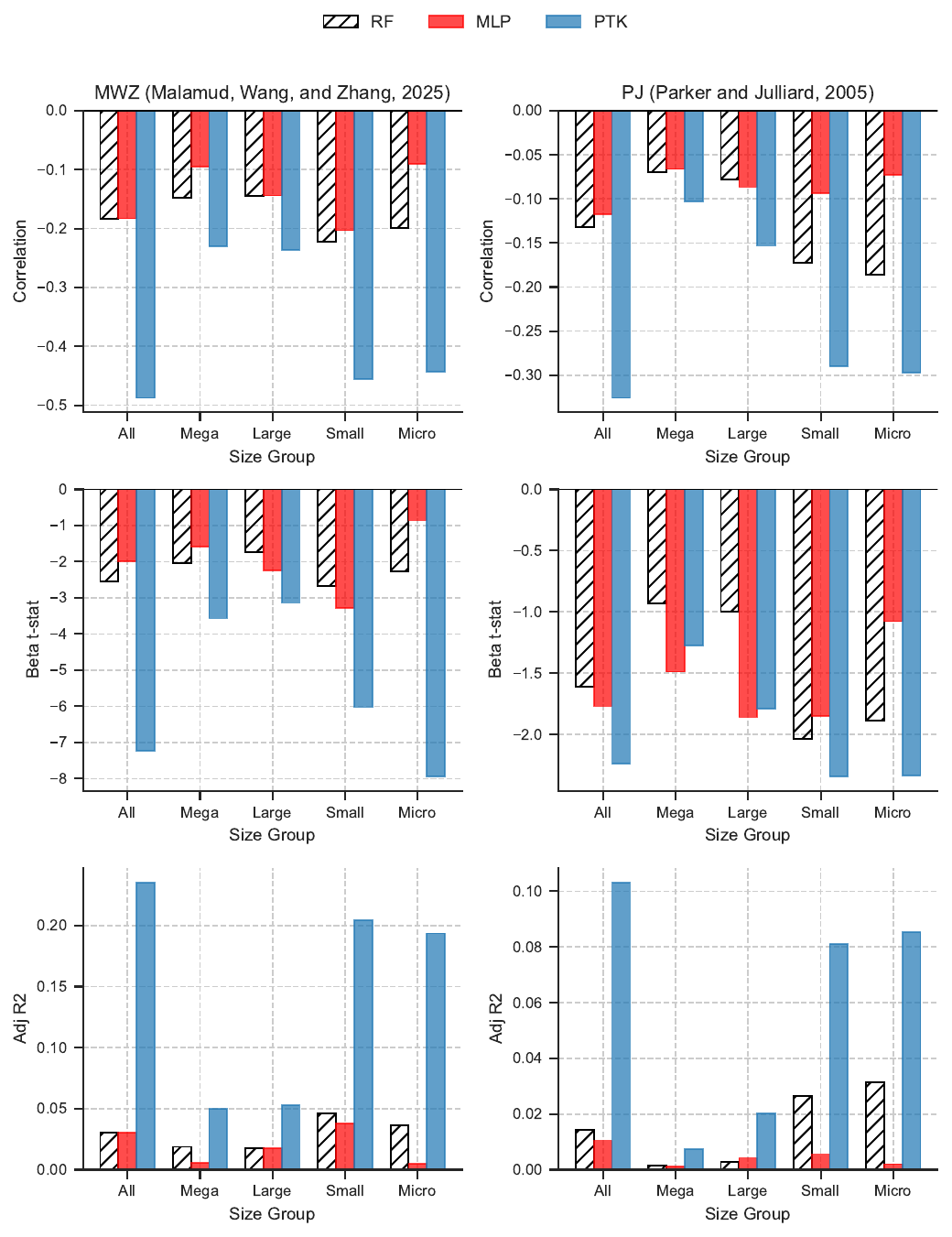}
\caption{(Top 2 PCs for ReLU and PTK based SDFs) Machine Learning SDFs and Consumption Risk.
This figure reports the relation between tradeable machine-learning stochastic discount factors (SDFs) and consumption-based SDFs from \cite{ParkerJulliard2005} (PJ) and \cite{MalamudWangZhang2025} (MWZ). The top row shows time-series correlations between portfolio returns from the fully trained DNN SDF $R^{\mathrm{DNN}}$ \eqref{def:rdnn} (red), the random-feature benchmark $\bar R^{\mathrm{ReLU}}$ \eqref{the-ReLU-bench} (grey, striped), and the PTK-based SDF $\bar R^{\mathrm{PTK}}$ \eqref{the-ptk-bench} (blue), and the MWZ future-consumption SDF (left panel) and the PJ long-run consumption SDF (right panel). 
The middle row reports $t$-statistics from univariate time-series regressions of each consumption-based SDF on the corresponding tradeable SDF return. The bottom row reports the associated adjusted $R^2$. All statistics are shown separately by size group. Consumption-based SDFs are constructed using future realized consumption growth aggregated over horizons of 84 months (MWZ) and 36 months (PJ), respectively, which truncates the sample in June 2018. Ridge regularization with $z_{eff}=10^{-5}$ is applied to the ReLU and PTK portfolios.}
\label{fig:links_to_ccapm_K2}
\end{figure}

\section{Conclusion}

This paper studies how modern machine learning models price assets in
high-dimensional environments and why their performance is constrained by
fundamental limits to learning. We develop a framework that links deep neural
networks, kernel methods, and factor pricing through the Portfolio Tangent Kernel
(PTK), which provides an explicit representation of the pricing rule induced by
gradient-based training.

Our main conceptual contribution is to separate feature learning from pricing.
Fully trained neural networks are effective at discovering rich, nonlinear
characteristics that are unspanned by standard factor constructions. However,
when used directly as stochastic discount factors, these models often fail to
aggregate the learned features efficiently. The PTK makes this distinction
explicit: network architecture and training determine which features are learned,
while the PTK governs how these features are priced through a statistically
disciplined factor model.

Using random matrix theory, we show that finite-sample estimation induces
endogenous shrinkage even in ridgeless, interpolating regimes. This shrinkage
imposes an effective low-dimensional structure on the pricing rule, and its
severity depends critically on the alignment between risk premia and the
principal components of the feature space. We formalize this mechanism through a
projection theorem that clarifies how limits to learning translate into economic
restrictions on achievable Sharpe ratios.

Empirically, we find that learned gradient features exhibit substantially stronger
alignment than random features, allowing the PTK-based SDF to outperform both raw
DNN-SDFs and strong random feature benchmarks across asset size groups. 
The results show that the gains from deep learning arise not from abandoning
structure, but from combining flexible representation learning with explicit and
statistically efficient pricing rules.

Overall, our findings suggest that the role of machine learning in asset pricing
is best understood as expanding the space of economically relevant
characteristics, while leaving pricing to be governed by well-understood
principles of risk, regularization, and finite-sample discipline.

\clearpage
\bibliographystyle{ecta}
\bibliography{biblio}

\appendix

\section{MLPs and the NTK in the Infinite Width Limit}

\begin{definition}[Multi-Layer Perceptron (MLP)]\label{def:mlp} Fix a neural network architecture given by the widths of layers $(n_1,\cdots,n_L).$ Let $\theta=(W^{(0)}, b^{(0)}, \cdots, W^{(L-1)}, b^{(L-1)})$ be a collection of weights and biases. Here, $W^{(l)}\in \R^{n_l\times n_{l+1}},\ b^{(l+1)}\in \R^{n_{l}}$, and the total dimension of $\theta$ is $P=\sum_{l=0}^{L-1} (n_l+1)n_{l+1}$. Let also $\phi:\R\to\R$ be a Lipschitz, twice differentiable nonlinearity with a bounded second derivative. 
The MLP neural network $f(x;\theta)$ is defined as 
\begin{equation}
\begin{aligned}
x &= \text{input}\ \in\ \R^d \\
y^{(l)}(x) &= \begin{cases}
x & \text{if } l = 0, \\
\phi(z^{(l)}(x)) & \text{if } l > 0.
\end{cases} \\
z^{(l+1)}_i(x) &= \frac1{n_l^{1/2}}\sum_j W^{(l)}_{ij} y^{(l)}_j(x) + b^{(l)}_i,\\
\end{aligned}
\end{equation}
where $y^{(l)}(x)$, $z^{(l)}(x) \in \mathbb{R}^{n_l}.$
The output of the network is 
\begin{equation}\label{nngp-1}
f(x;\theta)\ =\ z^{(L)}(x)\ =\ \frac1{n_{L-1}^{1/2}}\sum_{j=1}^{n_{L-1}}W_j^{(L-1)}y_j^{(L-1)}(x)\,. 
\end{equation}
At initialization, all weights $W^{(\ell)}_{i,j}$ and biases $b^{(l)}$ in each hidden layer are sampled independently from $N(0,(\gs_W^{(l)})^2)$ and $N(0,(\gs_b^{(l)})^2)$, respectively, for some layer-specific parameters $\gs_W^{(l)}, \gs_b^{(l)}$.
\end{definition}

As a benchmark example, consider a single-hidden-layer NN 
\begin{equation}\label{simple-nn}
\begin{aligned}
&f(x;\theta)\ =\ z^1(x)\ =\ \frac1{n_1^{1/2}}\sum_{k=1}^{n_1}W_k^{1}\phi\left(\frac1{n_0^{1/2}}\sum_{i=1}^d W_{k,i}^0 x_i+b_k^0\right)\\
&=\ \frac1{n_1^{1/2}}\sum_{k=1}^{n_1}W_k^{1}\phi\left(\frac1{n_0^{1/2}}(W_k^0)'x+b_k^0\right)\\
&=\ \frac1{n_1^{1/2}} (W^1)'\phi\left(\frac1{n_0^{1/2}}(W^0)'x+b^0\right)\,. 
\end{aligned}
\end{equation}
The simple random feature model is therefore equivalent to the shallow (single hidden layer) neural network \eqref{simple-nn}, with $\theta=W^1\in \R^P,\ W_k=W_k^0\in \R^d,\ P=n_1$. However, there is one important caveat: Neural networks are trained end-to-end; that is, every single weight, including all weights of all hidden layers, is optimized. E.g., for the simple 1-hidden layer net \eqref{simple-nn}, the whole $\theta=(W^1,W^0,b^0)\in \R^{n_1(d+2)}$ is optimized upon.

\begin{definition}[The Infinite Width NTK]\label{def:inf-ntk} Let 
\begin{equation}\label{scary-int-1}
\begin{aligned}
&\widehat\Sigma^{(l)}(x,\tilde x) \\
&= (\gs_w^{(l)})^2\int dz \, d\tilde z \frac{d}{dz}\phi(z) \frac{d}{d\tilde z}\phi(\tilde z) \mathcal{N}\left( \begin{pmatrix} z \\ \tilde z \end{pmatrix} ; \mathbf{0}, (\gs_w^{(l-1)})^2\begin{pmatrix} \Sigma^{(l-1)}(x, x) & \Sigma^{(l-1)}(x,\tilde x) \\  \Sigma^{(l-1)}(\tilde x,x) &  \Sigma^{(l-1)}(\tilde x,\tilde x) \end{pmatrix} +(\sigma^{(l-1)}_b)^2\,{\bf 1}_{2\times 2}  \right)\,.
\end{aligned}
\end{equation}
Let also 
\begin{equation}
\Theta^{(1)}(x,\tilde x)\ =\ \Sigma^{(1)}(x,\tilde x)+1,
\end{equation}
and then define recursively for $l>1:$ 
\begin{equation}\label{ntk-closed}
\Theta^{(l)}(x,\tilde x)\ =\ \Theta^{(l-1)}(x,\tilde x)\widehat\Sigma^{(l)}(x,\tilde x)\ +\ \Sigma^{(l)}(x,\tilde x)+1
\end{equation}
\end{definition}

\begin{theorem}\label{ntk:inf} Suppose that $\phi(x)$ is uniformly Lipschitz.\footnote{That is, $|\phi(x)-\phi(y)|\le C |x-y|$ for some $C>0.$} Then, at initialization, 
\begin{equation}
\nabla_{\theta}z^{(l)}_i(x)\nabla_{\theta}z^{(l)}_{i_1}(\tilde x)\ \to\ \gd_{i,i_1}\Theta^{(l+1)}(x,\tilde x)
\end{equation}
where $\Theta$ is defined in Definition \ref{def:inf-ntk}.
\end{theorem}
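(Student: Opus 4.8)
The plan is to prove the claim by induction on the layer index $l$, at the random initialization $\theta=\theta(0)$, with all convergences in probability as the widths $n_1,\dots,n_L\to\infty$ (taken sequentially, one layer at a time, as in \citep{jacot2018neural}, or jointly via a Tensor Programs master theorem \citep{yang2020tensor}); the uniform Lipschitz assumption on $\phi$ will enter only to keep the relevant averages' moments under control. The base case $l=0$ is exact and needs no limit: since $z^{(0)}_i(x)=n_0^{-1/2}\sum_j W^{(0)}_{ij}x_j+b^{(0)}_i$ depends on $\theta$ only through the block $(W^{(0)}_{i\cdot},b^{(0)}_i)$, and the blocks of $z^{(0)}_i$ and $z^{(0)}_{i_1}$ are disjoint when $i\ne i_1$, one reads off directly $\nabla_\theta z^{(0)}_i(x)\cdot\nabla_\theta z^{(0)}_{i_1}(\tilde x)=\delta_{i,i_1}\big(n_0^{-1}x'\tilde x+1\big)=\delta_{i,i_1}\big(\Sigma^{(1)}(x,\tilde x)+1\big)=\delta_{i,i_1}\Theta^{(1)}(x,\tilde x)$.

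For the inductive step I use that $z^{(l)}_i(x)=n_l^{-1/2}\sum_j W^{(l)}_{ij}\phi(z^{(l-1)}_j(x))+b^{(l)}_i$ depends on the older parameters $\theta_{<l}=(W^{(0)},b^{(0)},\dots,W^{(l-1)},b^{(l-1)})$ only through $z^{(l-1)}$, and on the fresh block $(W^{(l)},b^{(l)})$ directly. These two coordinate sets are disjoint, so the gradient inner product splits with no cross term, $\nabla_\theta z^{(l)}_i(x)\cdot\nabla_\theta z^{(l)}_{i_1}(\tilde x)=A+B$, where $A$ collects the $(W^{(l)},b^{(l)})$ coordinates and $B$ the $\theta_{<l}$ coordinates. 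The term $A$ is the ``new random feature'' piece: $\partial z^{(l)}_i/\partial W^{(l)}_{i'j}=\delta_{i,i'}n_l^{-1/2}\phi(z^{(l-1)}_j(x))$ and $\partial z^{(l)}_i/\partial b^{(l)}_{i'}=\delta_{i,i'}$, so $A=\delta_{i,i_1}\big(n_l^{-1}\sum_j\phi(z^{(l-1)}_j(x))\phi(z^{(l-1)}_j(\tilde x))+1\big)$, and by Theorem \ref{th:nn=nngp} together with a law of large numbers the average converges to the Gaussian integral \eqref{scary-int}, giving $A\to\delta_{i,i_1}\big(\Sigma^{(l+1)}(x,\tilde x)+1\big)$ — the NNGP summand of the recursion \eqref{ntk-closed}.

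The term $B$ is where the inductive hypothesis enters. The chain rule gives $\nabla_{\theta_{<l}}z^{(l)}_i(x)=n_l^{-1/2}\sum_j W^{(l)}_{ij}\,\phi'(z^{(l-1)}_j(x))\,\nabla_{\theta_{<l}}z^{(l-1)}_j(x)$, whence
\[ B=\frac1{n_l}\sum_{j,j'}W^{(l)}_{ij}W^{(l)}_{i_1j'}\,\phi'(z^{(l-1)}_j(x))\,\phi'(z^{(l-1)}_{j'}(\tilde x))\,\Big(\nabla_{\theta_{<l}}z^{(l-1)}_j(x)\cdot\nabla_{\theta_{<l}}z^{(l-1)}_{j'}(\tilde x)\Big). \]
Since $\nabla_{\theta_{<l}}z^{(l-1)}_j=\nabla_\theta z^{(l-1)}_j$ (layer $l-1$ does not depend on $(W^{(l)},b^{(l)})$ or later blocks), the inductive hypothesis replaces the parenthesis by $\delta_{j,j'}\Theta^{(l)}(x,\tilde x)$ — in particular the off-diagonal entries are $o(1)$ — collapsing $B$ to $\Theta^{(l)}(x,\tilde x)\cdot n_l^{-1}\sum_j W^{(l)}_{ij}W^{(l)}_{i_1j}\phi'(z^{(l-1)}_j(x))\phi'(z^{(l-1)}_j(\tilde x))$ up to a vanishing remainder. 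Using that $W^{(l)}$ is independent of $z^{(l-1)}$, that $E[W^{(l)}_{ij}W^{(l)}_{i_1j}]=\delta_{i,i_1}(\sigma_w^{(l)})^2$, and Theorem \ref{th:nn=nngp} for the law of the $z^{(l-1)}_j$, a second law of large numbers identifies this average with the Gaussian integral \eqref{scary-int-1}, so $B\to\delta_{i,i_1}\,\Theta^{(l)}(x,\tilde x)\,\widehat\Sigma^{(l+1)}(x,\tilde x)$. Adding, $A+B\to\delta_{i,i_1}\big(\Theta^{(l)}\widehat\Sigma^{(l+1)}+\Sigma^{(l+1)}+1\big)(x,\tilde x)=\delta_{i,i_1}\Theta^{(l+1)}(x,\tilde x)$ by \eqref{ntk-closed}, completing the induction.

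The hard part is entirely in the two cross-neuron averages. At finite width the pre-activations $(z^{(l-1)}_j)_j$ are genuinely dependent, so invoking any law of large numbers requires first establishing that they are \emph{asymptotically i.i.d.}\ Gaussian with the covariance of \eqref{scary-int}; doing this rigorously is exactly what forces the sequential-width argument of \citep{jacot2018neural} or the Tensor Programs formalism of \citep{yang2020tensor}, and this bookkeeping is where essentially all of the real work lies. The second delicate point is controlling the off-diagonal part of the double sum defining $B$, which has $\sim n_l^2$ terms of size $O(n_l^{-1})$ and vanishes only because the inductive hypothesis forces $\nabla_\theta z^{(l-1)}_j(x)\cdot\nabla_\theta z^{(l-1)}_{j'}(\tilde x)=o(1)$ for $j\ne j'$, in combination with the uniform bound $|\phi'|\le C$ from the Lipschitz hypothesis, which keeps the second moments of all these sums finite as the width grows.
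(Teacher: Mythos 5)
Your proposal is correct and follows essentially the same route as the paper's own proof: induction on the layer, an exact computation at layer $0$ giving $\delta_{i,i_1}\Theta^{(1)}$, and an inductive step that splits the gradient inner product into the fresh-block $(W^{(l)},b^{(l)})$ term (yielding the $\Sigma+1$ summand) and the $\theta_{<l}$ term collapsed via the inductive hypothesis $\delta_{j,j'}\Theta^{(l)}$ and a law of large numbers over $n_l$ (yielding the $\Theta\,\widehat\Sigma$ summand). Your added remarks on asymptotic Gaussianity of pre-activations and control of the off-diagonal sums simply make explicit the steps the paper treats informally, so no further comparison is needed.
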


\begin{proof}[Proof of Theorem \ref{ntk:inf}] The proof is by induction. Recall that 
\begin{equation}
\begin{aligned}
x &= \text{input}\ \in\ \R^d \\
y^{(l)}(x) &= \begin{cases}
x & \text{if } l = 0, \\
\phi(z^{(l-1)}(x)) & \text{if } l > 0.
\end{cases} \\
z^{(l)}_i(x) &= \frac1{n_l^{1/2}}\sum_j W^{(l)}_{ij} y^{(l)}_j(x) + b^{(l)}_i,\\
\end{aligned}
\end{equation}
where $y^{(l)}(x)$, $z^{(l-1)}(x) \in \mathbb{R}^{n^{(l)} \times 1}.$ 
We have 
\begin{equation}
z^{(0)}_i(x)\ =\ \frac1{n_0^{1/2}}\sum_{j=1}^{n_0}W_{i,j}^{(0)}x_j\ +\ b^0_j
\end{equation}
and, hence, 
\begin{equation}
\nabla_{\theta}z^{(0)}_i(x)\ =\ (\frac1{n_0^{1/2}}x,1),
\end{equation}
so that 
\begin{equation}
\nabla_{\theta}z^{(0)}_i(x)\nabla_{\theta}z^{(0)}_{i_1}(\tilde x)'\ =\ \gd_{i,i_1}(\frac1{n_0}x'\tilde x\ +\ 1)\ =\ \gd_{i,i_1}(\Sigma^{(1)}(x,\tilde x)+1)\ =\ \gd_{i,i_1}\Theta^{(1)}(x,\tilde x)\,. 
\end{equation}
We now proceed by induction. Suppose we have proved the claim for $z^{(l-1)}$. We have 
\begin{equation}
z^{(l)}_i(x)\ =\  \frac1{n_l^{1/2}}\sum_{j=1}^{n_l}W_{i,j}^{(l)}\phi(z_j^{(l-1)}(x))+b^{(l)}
\end{equation}
The vector of coefficients of this neural network, $\theta,$ can be decomposed into $\theta=(\theta_{l-1}, W^{(l)},b^{(l)}),$ where $\theta_{l-1}$ are all the coefficients except for those of the $l$'th layer. Each $z_j^{(l)}(x),\ j=1,\cdots,n_{l}$. We have  
\begin{equation}
\begin{aligned}
&\nabla_{\theta_{l-1}}z^{(l)}_i(x)\ =\ \frac1{n_{l}^{1/2}}\sum_{j=1}^{n_l}W_{i,j}^{(l)}\nabla_{\theta_{l-1}}\phi(z_j^{(l-1)}(x))\\
&=\ \frac1{n_{l}^{1/2}}\sum_{j=1}^{n_{l}}W_{i,j}^{(l)}\nabla_{\theta_{l-1}}z_j^{(l-1)}(x)\phi'(z_j^{(l-1)}(x))
\end{aligned}
\end{equation}
and, hence, 
\begin{equation}
\begin{aligned}
&\nabla_{\theta_{l-1}}z^{(l)}_{i_1}(x)\nabla_{\theta_{l-1}}z^{(l)}_{i_2}(\tilde x)'\\ 
&=\ \frac1{n_{l}}\sum_{j_1=1}^{n_{l}}W_{i_1,j_1}^{(L+1)}\nabla_{\theta_{l-1}}z_{j_1}^{(l-1)}(x)\phi'(z_{j_1}^{(l-1)}(x))\sum_{j_2=1}^{n_{l}}W_{i_2,j_2}^{(l)}\nabla_{\theta_{l-1}}z_{j_2}^{(l-1)}(\tilde x)\phi'(z_{j_2}^{(l-1)}(\tilde x))\,.
\end{aligned}
\end{equation}
By the induction hypothesis, in the limit as $n_{l-1},\cdots,n_1\to\infty,$ we have 
\begin{equation}
\begin{aligned}
&\nabla_{\theta_{l-1}}z^{(l)}_{i_1}(x)\nabla_{\theta_{l-1}}z^{(l)}_{i_2}(\tilde x)'\\ 
&\approx\ \frac1{n_{l}}\sum_{j_1=1}^{n_{l}}W_{i_1,j_1}^{(L+1)}\phi'(z_{j_1}^{(l-1)}(x))\sum_{j_2=1}^{n_{l}}W_{i_2,j_2}^{(l)}\phi'(z_{j_2}^{(l-1)}(\tilde x))\gd_{j_1,j_2}\Theta^{(l)}(x,\tilde x)\\
&=\ \frac1{n_{l}}\sum_{j=1}^{n_{l}}W_{i_1,j}^{(L+1)}\phi'(z_{j}^{(l-1)}(x))W_{i_2,j}^{(l)}\phi'(z_{j}^{(l-1)}(\tilde x))\Theta^{(l)}(x,\tilde x)
\end{aligned}
\end{equation}
In the limit as $n_l\to\infty,$ we get 
\begin{equation}
\begin{aligned}
&\nabla_{\theta_{l-1}}z^{(l)}_{i_1}(x)\nabla_{\theta_{l-1}}z^{(l)}_{i_2}(\tilde x)'\\ 
&\to\gd_{i_1,i_2}(\gs_W^{(l)})^2E[\phi'(z_{j}^{(l-1)}(x))\phi'(z_{j}^{(l-1)}(\tilde x))]\Theta^{(l)}(x,\tilde x)
\end{aligned}
\end{equation}
At the same time, 
\begin{equation}
\begin{aligned}
&\nabla_{(W^{(l)},b^{(l)})}z^{(l)}_{i_1}(x)\nabla_{(W^{(l)},b^{(l)})}z^{(l)}_{i_2}(\tilde x)'\\ 
&=\gd_{i_1,i_2}(\frac1{n_l}\sum_j \phi(z_{j}^{(l-1)}(x))\phi(z_{j}^{(l-1)}(\tilde x))+1)\\
&\to\ \gd_{i_1,i_2}(E[\phi(z_{j}^{(l-1)}(x))\phi(z_{j}^{(l-1)}(\tilde x))]+1)\\
&=\ \gd_{i_1,i_2}(\Sigma^{(l)}(x,\tilde x)+1)
\end{aligned}
\end{equation}
\end{proof}

\section{Experimental Setup}
\label{sec:experiments}
This appendix describes the full empirical pipeline. We first present the MLP training configuration (\ref{subsec:mlp_training}), then detail the construction of PTK and random feature factors (\ref{subsec:factor_construction}), followed by performance evaluation metrics and statistical diagnostics (\ref{subsec:random_seeds}), and finally robustness procedures based on multiple random seeds and ensembling (\ref{subsec:stats}).

\subsection{MLP Training Configuration}\label{subsec:mlp_training}

The MLP models are trained using a rolling-window scheme designed to reflect realistic real-time forecasting and portfolio construction. At each month $t$, the model is estimated using data from the previous 60 months and then used to generate portfolio weights for month $t+1$. The training procedure, architecture, and optimization details are summarized below.

\paragraph{Model architecture.} 
We employ a standard parametrization MLP (SP-MLP) with ReLU activations. Depths: 1, 2, 4 layers; widths: 16–512 neurons (16, 32, 64, 128, 256, 512). Total configurations: 18 (3×6).

\paragraph{Initialization.} Network weights are initialized using the default PyTorch Kaiming uniform scheme, where weights are drawn from $\mathcal{U}(-\sqrt{k}, \sqrt{k})$ with $k = 1/\text{in\_features}$, and biases are initialized to zero.

\paragraph{Training scheme.} For the initial training window, the model is trained from scratch for 20 epochs. Thereafter, as the training window rolls forward by one month (dropping the oldest observation and adding the newest), the model is warm-started from the previous solution and fine-tuned for 10 epochs. This rolling fine-tuning scheme substantially reduces computational cost while preserving stability and convergence.

\paragraph{Mini-batching and optimization.} Optimization is performed using the Adam optimizer, with learning rate $\text{lr} = 2^{-\text{lr\_power}}$ and $\text{lr\_power} = 16$. Training is conducted with a batch size of one, where each batch corresponds to the entire cross-section of stocks in a given month. Specifically, at each training step, the model simultaneously processes all $N_t$ stocks observed in month $t$, producing a single portfolio return forecast. This design aligns the training objective directly with the portfolio construction problem.

\paragraph{Loss function and normalization.} The model is trained to minimize the maximal Sharpe ratio regression (MSRR) loss. To ensure scale invariance and stable optimization across months with differing cross-sectional sizes, the loss is normalized by the number of stocks in the cross-section. Formally, letting $f_t \in \mathbb{R}^{N_t}$ denote the vector of MLP predictions and $R_{t+1} \in \mathbb{R}^{N_t}$ the realized excess returns, the loss is defined as
\[
\mathcal{L}_t
\;=\;
\frac{\bigl(1 - f_t^\top R_{t+1}\bigr)^2}{N_t}.
\]
This normalization ensures that each monthly training step contributes comparably to the overall optimization objective, regardless of fluctuations in the cross-sectional sample size.

\paragraph{Training window.} Throughout, we employ a rolling training window of length 60 months.

\subsection{Factor Construction}\label{subsec:factor_construction}

\subsubsection{Portfolio Tangent Kernel (PTK) Factors}

The PTK factors are derived from the gradients of the trained MLP with respect to its parameters. For a given month $t$ with $N_t$ stocks, let $f(X_{i,t}; \theta_t^*)$ denote the MLP prediction for stock $i$.

\begin{itemize}

\item \textbf{Gradient Computation}: We compute the gradient of the normalized factor return (MSRR loss) with respect to the network parameters:

\[
g_t(\theta_t^*) 
=
\nabla_\theta \left(
\frac{1}{\sqrt{N_t}} \sum_{i=1}^{N_t} f(X_{i,t}; \theta_t^*) R_{i,t+1}
\right) \in \mathbb{R}^{1\times P},
\]

where $R_{i,t+1}$ is the realized excess return and $P$ is the total number of network parameters. Stacking these vectors over time produces the gradient matrix

\[
G = 
\begin{bmatrix}
g_1 \\
g_2 \\
\vdots \\
g_T
\end{bmatrix} \in \mathbb{R}^{T \times P}.
\]

\item \textbf{Kernel Matrix}: The kernel matrix $\bK$ is computed as the uncentered Gram matrix of the gradients:
\[
\bK = \frac{1}{T} G G' \in \mathbb{R}^{T \times T}.
\]

\item \textbf{Ridge Regression}: We solve for the portfolio weights using ridge regression on the kernel matrix with penalty parameter $z$.

\item \textbf{Hyperparameters}:
\begin{itemize}
\item \textbf{Ridge penalties ($z_{\mathrm{eff}}$)}: 8 values logarithmically spaced from $10^{-5}$ to $100$.
\item \textbf{Rolling windows}: $\{12, 60, 120, 240, 360\}$ months.
\end{itemize}

\end{itemize}

\subsubsection{ReLU (random-feature) Factors}
To provide a baseline for the PTK, we generate random-feature factors that approximate infinite-width (dimension $P=25,000$).

\begin{itemize}
\item \textbf{Weight Initialization}: Random weights $\Omega \in \mathbb{R}^{P \times d}$ are sampled from a scaled Gaussian distribution:
\[
\Omega_{ij} \sim \mathcal{N}\left(0, \frac{2\sigma^2}{d}\right)
\]
where $d$ is the input dimension and $\sigma$ is a scaling parameter.

\item \textbf{Activations}: \textbf{ReLU}: $\phi(x) = \max(0, x \Omega^T)$.

\item \textbf{Factor Construction}: The random-feature factors $F_t$ are constructed as managed portfolios, scaled by $N_t^{-1/2}$:
\[
F_{j,t} = \frac{1}{\sqrt{N_t}} \sum_{i=1}^{N_t} \phi_j(x_{i,t}) R_{i,t+1}
\]
\end{itemize}

\subsection{Random Seeds and Robustness}\label{subsec:random_seeds}
To ensure the robustness of our results and account for the variability in initialization:
\begin{itemize}
\item \textbf{Random Seeds}: All experiments are repeated across multiple random seeds (10 seeds).
\begin{itemize}
\item For \textbf{MLP} models, the seed controls the initialization of the neural network weights.
\item For \textbf{random-feature} models, the seed determines the sampling of the random weight matrix $\Omega$.
\end{itemize}
\item \textbf{Ensembling}: In our analysis, we often report the performance of an ensemble of models trained with different seeds. The ensemble return is computed as the volatility-weighted average of the individual seed returns:
\[
R_{ensemble, t} = \sum_{k=1}^{K} w_{k,t} R_{k,t}
\]
where $w_{k,t} \propto 1/\widehat\sigma_{k,t}$ is the inverse rolling volatility of the $k$-th seed's strategy.
\end{itemize}

\subsection{Performance Evaluation and Statistical Tests} \label{subsec:stats}
We evaluate the strategies using the following metrics:

\begin{itemize} 
\item \textbf{Alpha ($\alpha$)}: The intercept from the OLS regression of the strategy's excess returns ($R_{strat}$) on the benchmark's excess returns ($R_{bench}$):
\[
R_{strat, t} = \alpha + \beta R_{bench, t} + \epsilon_t\,,
\]
with Newey–West (1987) standard errors using appropriate lag length.
\item \textbf{Sharpe Ratio}: The annualized risk-adjusted return, calculated as $\sqrt{12} \cdot \frac{\mu}{\sigma}$, where $\mu$ and $\sigma$ are the mean and standard deviation of OOS monthly returns.
\end{itemize}

\subsubsection{Implicit Regularization and Debiased GRS}
To characterize model complexity, overfitting, and generalization behavior, we analyze the spectral properties of the kernel matrix using the following statistics:

\begin{itemize}
\item \textbf{Implicit Regularization ($Z_*(z)$)}: The effective degrees of freedom of the kernel matrix $K$ with ridge penalty $z$:
\[
Z_*(z) = \frac{T}{\text{tr}((zI + K)^{-1})} = \frac{T}{\sum_{i=1}^T \frac{1}{z + \lambda_i}}
\]
where $\lambda_i$ are the eigenvalues of $K$ and $T$ is the window size. We often report the normalized metric divided by $\frac{1}{T}\text{tr}(K)$.

\item \textbf{Debiased GRS Statistic ($\hat W_{\text{debiased}}(z)$)}: To correct for the bias in the standard GRS statistic when $P/T$ is large, we compute:
\[
\hat W_{\text{debiased}}(z)
\]
using formula \eqref{debiased}.
\end{itemize}

\section{Additional Plots}

\begin{figure}[htbp]
\centering
\includegraphics[width=0.8\textwidth]{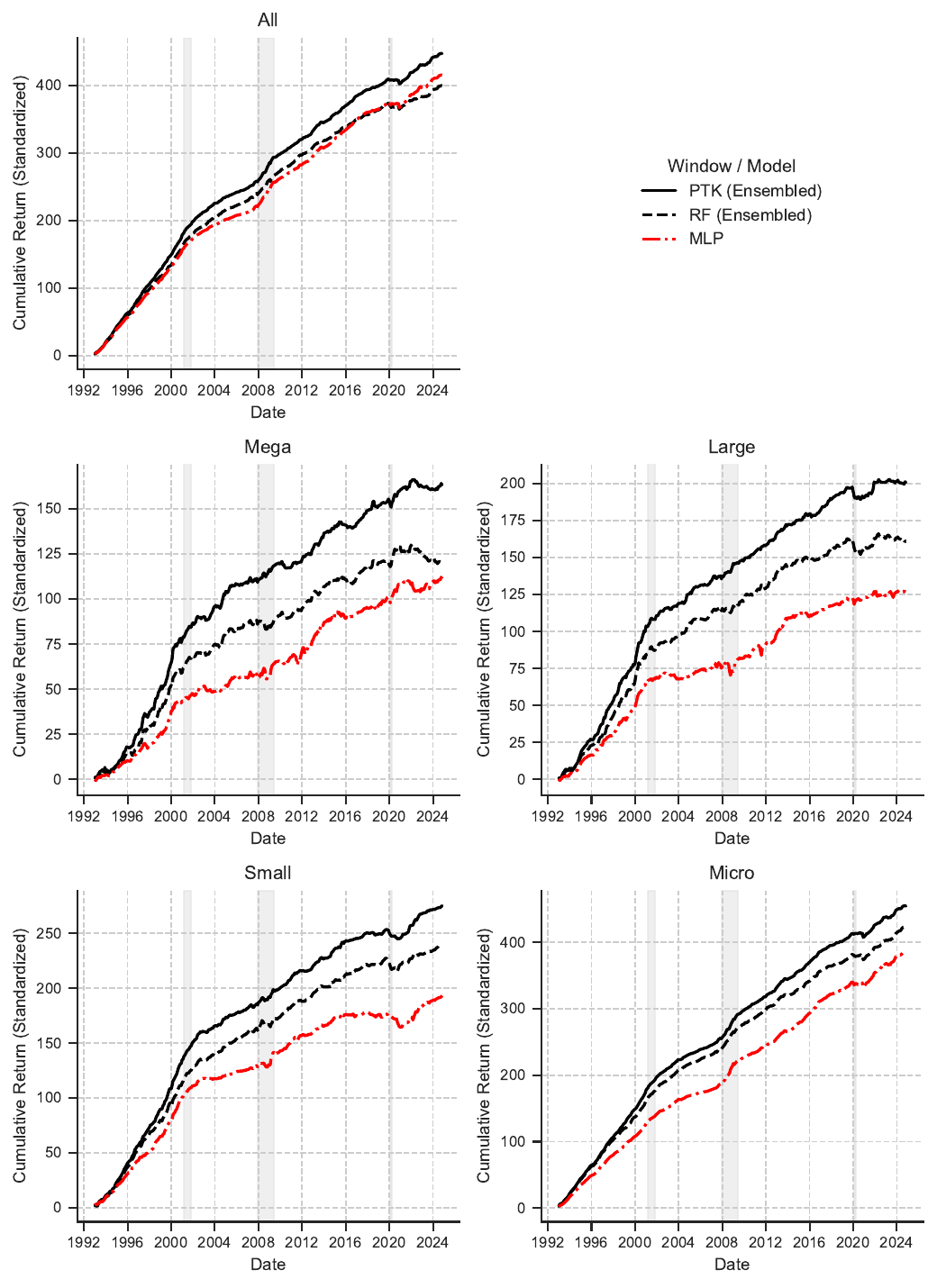}
\caption{Cumulative Returns of $\bar R^{\rm PTK}$ from \eqref{the-ptk-bench}, $\bar R^{\textrm{ReLU}}$ from \eqref{the-ReLU-bench}, and $R^{\rm DNN}$ from \eqref{def:rdnn}. We use $D=2,\ w=256$ and $z_{eff}=10^{-5}.$ All returns are normalized to equal full-sample variance.}
\label{fig:ptk_cumulative_returns}
\end{figure}

\begin{figure}[htbp]
\centering
\includegraphics[width=0.8\textwidth]{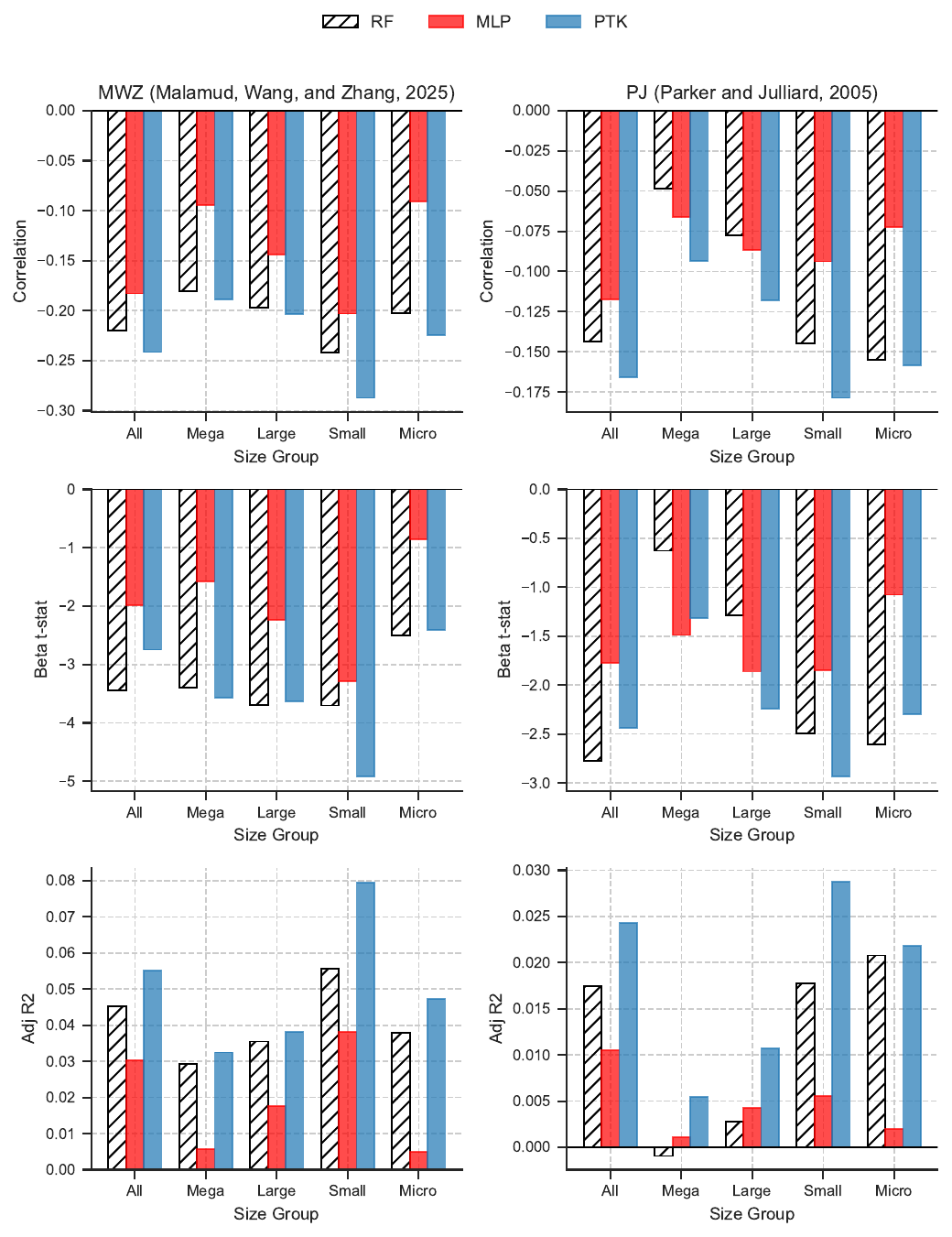}
\caption{Machine Learning SDFs and Consumption Risk.
This figure reports the relation between tradeable machine-learning stochastic discount factors (SDFs) and consumption-based SDFs from \cite{ParkerJulliard2005} (PJ) and \cite{MalamudWangZhang2025} (MWZ). The top row shows time-series correlations between portfolio returns from the fully trained DNN SDF $R^{\mathrm{DNN}}$ \eqref{def:rdnn} (red), the random-feature benchmark $\bar R^{\mathrm{ReLU}}$ \eqref{the-ReLU-bench} (grey, striped), and the PTK-based SDF $\bar R^{\mathrm{PTK}}$ \eqref{the-ptk-bench} (blue), and the MWZ future-consumption SDF (left panel) and the PJ long-run consumption SDF (right panel). 
The middle row reports $t$-statistics from univariate time-series regressions of each consumption-based SDF on the corresponding tradeable SDF return. The bottom row reports the associated adjusted $R^2$. All statistics are shown separately by size group. Consumption-based SDFs are constructed using future realized consumption growth aggregated over horizons of 84 months (MWZ) and 36 months (PJ), respectively, which truncates the sample in June 2018. Ridge regularization with $z_{eff}=10^{-5}$ is applied to the ReLU and PTK portfolios.}
\label{fig:links_to_ccapm}
\end{figure}

\end{document}